\numberwithin{equation}{section} \makeatletter
\newtheorem{prop}{Proposition}[section]
\renewcommand{\tilde}{\widetilde}
\renewcommand{\hat}{\widehat}
\renewcommand{\simeq}{\cong}
\newcommand{\bref}[1]{\textbf{\ref{#1}}}
\newcommand{\p}[1]{|#1|}
\newcommand{\gh}[1]{\mathrm{gh}(#1)}
\newcommand{\dd}{\partial}
\renewcommand{\d}{\partial}
\newcommand{\tensor}{\otimes}
\renewcommand{\geq}{\,{\geqslant}\,}
\renewcommand{\leq}{\,{\leqslant}\,}
\newcommand{\inner}[2]{\langle #1{,}\,#2\rangle}
\newcommand{\binner}[2]{%
  {\langle}\kern-4.15pt{\langle}#1{,}\,#2{\rangle}\kern-4.15pt{\rangle}}
\newcommand{\commut}[2]{[#1{,}\,#2]}
\newcommand{\pb}[2]{\left\{{}#1{},{}#2{}\right\}}
\newcommand{\half}{\mathchoice{%
    \ffrac{1}{2}}{\frac{1}{2}}{\frac{1}{2}}{\frac{1}{2}}}
\newcommand{\ffrac}[2]{\raisebox{.5pt}%
  {\footnotesize$\displaystyle\frac{#1}{#2}$}\kern1pt}
\newcommand{\brst}{\mathsf{\Omega}}
\newcommand{\red}{\mathrm{red}}
\newcommand{\dl}[1]{\mathchoice{\ffrac{\dd}{\dd #1}}{\frac{\dd}{\dd
      #1}}{\ffrac{\dd}{\dd #1}}{\ffrac{\dd}{\dd #1}}}
\newcommand{\dr}[1]{\ffrac{{\overset{\leftarrow}{\partial}}}{ \partial #1}}
\newcommand{\dlf}[1]{\mathchoice{\ffrac{\dd^F}{\dd #1}}{\frac{\dd^F}{\dd
      #1}}{\ffrac{\dd^F}{\dd #1}}{\ffrac{\dd^F}{\dd #1}}}
\newcommand{\dover}[2]{\ffrac{\dd #1}{\dd #2}}
\newcommand{\ddl}[2]{\ffrac{\dd #1}{\dd #2}}
\newcommand{\vdl}[1]{\ffrac{{\delta}}{\delta #1}}
\newcommand{\vddl}[2]{{\ffrac{\delta #1}{\delta #2}}}
\def\cA{\mathcal{A}}
\def\cF{\mathcal{F}}
\def\cH{\mathcal{H}}
\def\cL{\mathcal{L}}
\def\cN{\mathcal{N}}
\def\cO{\mathcal{O}}
\def\cP{\mathcal{P}}
\def\cV{\mathcal{V}}
\def\BGST{Barnich:2004cr}
\def\BGadS{Barnich:2006pc}
\def\Goff{Grigoriev:2006tt}
\def\AGT{Alkalaev:2008gi}
\def\AG{Alkalaev:2009vm}
\def\GD{Grigoriev:1999qz}
\newcommand{\pr}[1]{\left( #1 \right)^F}
\begin{document}

\def\mytitle{First order parent formulation for generic gauge field
  theories}

\pagestyle{myheadings}
\markboth{\textsc{\small Barnich, Grigoriev}}{%
  \textsc{\small Nonlinear parent theory}}
\addtolength{\headsep}{4pt}

\begin{flushright}\small
ULB-TH/10-09	\\
FIAN/TD-06/10
\end{flushright}

\begin{centering}

  \vspace{1cm}

  \textbf{\Large{\mytitle}}

  \vspace{1.5cm}

  {\large Glenn Barnich$^{a,*}$ and Maxim Grigoriev$^b$} 

\vspace{.5cm}

\begin{minipage}{.9\textwidth}\small \it \begin{center}
   $^{a}$Physique Th\'eorique et Math\'ematique\\ Universit\'e Libre de
   Bruxelles\\ and \\ International Solvay Institutes \\ Campus
   Plaine C.P. 231, B-1050 Bruxelles, Belgium \end{center}
\end{minipage}

\vspace{.5cm}

\begin{minipage}{.9\textwidth}\small \it \begin{center}
   $^{b}$Tamm Theory Department, Lebedev Physics
   Institute,\\ Leninsky prospect 53, 119991 Moscow, Russia\end{center}
\end{minipage}

\end{centering}

\vspace{1cm}

\begin{center}
  \begin{minipage}{.9\textwidth}
    \textsc{Abstract}. We show how a generic gauge field theory described by a
BRST differential can systematically be reformulated as a first order parent
system whose spacetime part is determined by the de Rham differential. In the
spirit of Vasiliev's unfolded approach, this is done by extending the original
space of fields so as to include their derivatives as new independent fields
together with associated form fields. Through the inclusion of the antifield
dependent part of the BRST differential, the parent formulation can be used both
for on and off-shell formulations. For diffeomorphism invariant models, the
parent formulation can be reformulated as an AKSZ-type sigma model. Several
examples, such as the relativistic particle, parametrized theories, Yang-Mills
theory, general relativity and the two dimensional sigma model are worked out in
details.
  \end{minipage}
\end{center}


\vfill

\noindent
\mbox{}
\raisebox{-3\baselineskip}{%
  \parbox{\textwidth}{\mbox{}\hrulefill\\[-4pt]}}
{\scriptsize$^*$Research Director of the Fund for
  Scientific Research-FNRS (Belgium).}

\thispagestyle{empty}
\newpage

\begin{small}
{\addtolength{\parskip}{-1.5pt}
 \tableofcontents}
\end{small}
\newpage

\section{Introduction}
\label{sec:introduction}

When dealing with the types of gauge field theories that are of
interest in theoretical high energy physics, it is often useful to
produce equivalent formulations that are local, make rigid symmetries
manifest or allow for an action principle (see
e.g.~\cite{Fierz:1939ix,Singh:1974qz,Fronsdal:1978rb} in the context
of higher spin fields). For instance, there has been a lot of focus on
a first order ``unfolded''
form~\cite{Vasiliev:1988xc,Vasiliev:1988sa,Vasiliev:1990en,%
  Shaynkman:2000ts,Vasiliev:2003ev,Vasiliev:2005zu} of the equations
of motion in the context of higher spin interactions. In this approach
the equations of motion are represented as a free differential algebra
(FDA). The latter structure was originally introduced in
mathematics~\cite{Sullivan:1977fk} and independently in the context of
supergravity~\cite{DAuria:1982nx,Fre:2008qw}.

The characteristic feature of the formulation discussed in the present
paper is that spacetime derivatives enter exclusively through the de
Rham differential acting on form fields. Our aim is the systematic
construction of such a first order formulation for generic gauge
theories. 

In the  linear case~\cite{\BGST,\BGadS}, this problem  has been solved
by using a BRST first quantized approach in combination with a version
of   Fedosov   quantization~\cite{Fedosov-book}.  Various   equivalent
formulations,  including  the  unfolded   one,  are  then  reached  by
reductions  that  correspond  to  the elimination  of  cohomologically
trivial  pairs   on  the   first  quantized  level.   Through  related
techniques, generalized symmetries  of bosonic singletons of arbitrary
spin   have  been   classified   ~\cite{Bekaert:2009fg}  and   concise
formulations of  mixed symmetry higher spin gauge  fields on Minkowski
and AdS spaces have been constructed~\cite{\AGT,\AG}. In this context,
let  us  also  mention  recent  progress  within  the  usual  unfolded
formalism        in        describing       free        mixed-symmetry
fields~\cite{Skvortsov:2008vs,Boulanger:2008up,Boulanger:2008kw,%
  Skvortsov:2009nv}.

For the non linear case treated in the present paper, we take as an
input the antifield dependent BRST differential of a starting point
interacting theory, which encodes the equations of motion, Noether
identities, gauge symmetries and their compatibility
conditions~\cite{Batalin:1981jr,Batalin:1983wj,%
  Batalin:1983jr,Batalin:1984ss,Batalin:1985qj} (see also
\cite{Henneaux:1992ig,Gomis:1995he} for reviews). The parent theory is
then constructed in the form of an extended BRST differential by
introducing the derivatives of the starting point fields as new
independent fields together with associated form fields in such a way
that all additional fields form generalized auxiliary fields.

Various reduced forms can then be obtained from the parent formulation
by eliminating one or another set of generalized auxiliary fields. In
particular, such sets can be related not only to trivial pairs of the
original BRST differential but also to trivial pairs for the extension
of the original BRST differential by the horizontal differential, such
as those studied in \cite{Brandt:1997iu,Brandt:1996mh,Brandt:2001tg}.
The associated generalized tensor calculus and ``Russian formulas''
(see~\cite{Stora:1983ct} for the original derivation) give rise to
corresponding geometrical structures in the reduced formulations.

Besides the obvious connection with unfolding, free differential
algebras and Fedosov quantization, the parent theory can also be
interpreted as a generalized AKSZ sigma model~\cite{Alexandrov:1997kv}
originally proposed in the context of the Batalin-Vilkovisky
formulation for topological field theories (see also
\cite{Cattaneo:1999fm,%
  Grigoriev:1999qz,Batalin:2001fc,Batalin:2001fh,
  Cattaneo:2001ys,Park:2000au,Roytenberg:2002nu,Ikeda:2006wd,
  Bonechi:2009kx,Barnich:2009jy}) for further developments).  More
precisely, the parent differential for non-topological theories
contains an extra term that can however be absorbed by a field
redefinition in the diffeomorphism invariant case.

Although in this paper we restrict ourselves to constructing a parent
formulation for a given gauge field theory for which the interactions
are already known, ultimately our aim is to use the techniques of the
parent formalism to built new interacting models. Indeed, the
usefulness of the parent approach comes from the fact that it combines
in a unified framework the control over the underlying geometry and
the manifest realization of global symmetries of the unfolded
approach~\cite{Vasiliev:2003ev,Vasiliev:2005zu} with the cohomological
control on gauge symmetries provided by BRST theory which leads to
systematic supergeometrical and deformation theoretical
techniques~\cite{Barnich:2000zw,Alexandrov:1997kv,Barnich:1993vg}.

The paper is organized as follows. In Section
\bref{sec:non-lagrangian-aksz}, we quickly review the local field
theory set-up and the BRST differential that describes the gauge
system. We then introduce the necessary additional fields and
operators and provide the parent form of the differential in
Sections~\bref{sec:parent} and \bref{sec:diff-parent-theory}.
Technical details on conventions are relegated to an Appendix. How the
parent theory relates to the AKSZ approach and what it looks like in
the particular case of linear theories is discussed in the next two
sections. General aspects of reductions including cohomological tools
are discussed in Section~\bref{sec:reduction}.  We then illustrate
various features of our analysis on concrete models: we start with
non-degenerate systems, show how the parent formulation for a
relativistic particle reduces to its Hamiltonian formulation, discuss
parent and unfolded formulations of Yang-Mills theory, produce both
off and on-shell versions of parent gravity and finally show how the
parent formulation of the Polyakov string gives rise to a gauge theory
for the Virasoro algebra.

\section{Parent theory}
\subsection{ Original BRST differential}
\label{sec:non-lagrangian-aksz}

The BRST formulation involves bosonic and fermionic fields
$z^\alpha(x)$. The set of fields is graded by an integer degree $g$,
the ghost number $\gh{\ }$. Parity is denoted by $\p{\ }$. The
physical fields are among the ghost number zero fields, while ghosts
and antifields of the minimal sector are typically in positive and
negative ghost numbers respectively.

Besides the local coordinates on the spacetime manifold, denoted by
$x^\mu$ with $\mu=0,\dots,n-1$ and $\gh{x^\mu}=0$, the $z^\alpha$ and
their derivatives are local coordinates on the fiber of an associated
jet-bundle in an algebraic approach (see
e.g.~\cite{Olver:1993,Andersonbook,Anderson1991,Dickey:1991xa,vinogradov:2001}
for reviews). The $z^\alpha$ and their derivatives are denoted by
$z^\alpha_{(\mu)}$ with $(\mu)=\mu_1\dots\mu_k$ a symmetric
multi-index. Local functions are functions that depend on $x^\mu$,
$z^\alpha$ and a finite number of their derivatives. The total
derivative is defined as the vector field
\begin{equation}
\d_\mu=\dover{}{x^\mu}+z^\alpha_\mu\dover{}{z^\alpha}+
z^\alpha_{\mu\rho}\dover{}{z^\alpha_{\rho}}+\dots\equiv
\dover{}{x^\mu}+\dover{^F}{x^\mu},\label{eq:4}
\end{equation}
where $\dover{^F}{x^\mu}$ denotes the action of the total derivative
on the fields $z^\alpha$ and their derivatives. 
For later use, we note that if we collect
the jet coordinates as the coefficients of a Taylor expansion,
\begin{equation}
  \label{eq:15}
  z^\alpha(x)=\sum_{k=0}\frac{1}{k!} z^\alpha_{\mu_1\dots\mu_k}
  x^{\mu_1}\dots x^{\mu_k}\equiv z^\alpha_{(\mu)}x^{(\mu)},
\end{equation}
the part of the total derivative that acts on the jet-coordinates is
also uniquely defined  through the relation 
\begin{equation}
  \label{eq:16}
   \frac{d}{d x^\mu} z^\alpha(x)=\dover{^F}{x^\mu} z^\alpha(x).
\end{equation}

The dynamics and gauge symmetries of the theory are determined by a
nilpotent BRST differential $s$ of ghost number one defined through
\begin{equation}
s z^\alpha=S^\alpha[x,z]\,, \qquad \commut{s}{\d_\mu}=0\,,\label{eq:9}
\end{equation}
where $S^\alpha[x,z]$ are local functions. The second equations
determines the ``prolongation'' of $s$ on the spacetime derivatives of
the fields. A standard field theoretic way to represent the BRST
differential is through functional derivatives or using the condensed
DeWitt notation (see e.g.~\cite{DeWitt:2003pm}), 
\begin{equation}
  s=\int d^nx\, S^\alpha\vddl{}{z^\alpha(x)}=S^a\vddl{}{z^a}\,,\qquad s
  x^\mu=0 \label{eq:10}
\end{equation}
where $S^\alpha$ is taken as a function of $z^\alpha(x)$ and its usual
derivatives, $d^nx=dx^0\wedge\dots \wedge dx^{n-1}$,
$a=(\alpha,x^\mu)$ and the summation convention includes integration
over $x^\mu$. 

The horizontal complex consists of the exterior algebra of $dx^\mu$
with coefficients that are local functions. Elements of this algebra
are denoted by $\omega[x,dx,z]$, with $dx^\mu$ considered as Grassmann
odd, i.e., as anticommuting with all odd fields, $\gh{dx^\mu}=0$. The
horizontal differential is $d_H=dx^\mu\d_\mu$. We assume that
horizontal forms can be decomposed into field/antifield independent
and dependent parts, $\omega[x,dx,z]=\omega[x,dx,0]+ \hat
\omega[x,dx,z]$. The bi-complex involving the latter is denoted by
$\hat\Omega^{*,*}$. A standard result is then the ``algebraic
Poincar\'e lemma'',
\begin{equation}
\begin{split}
H^k(d_H,\hat\Omega)&=0\quad {\rm  for}\quad 0\leq k< n\,,\\
\omega^n&=d_H\eta^{n-1}\iff\vddl{\omega^n}{z^\alpha}
\equiv\dover{\omega^n}{z^\alpha} 
-\d_\mu \dover{\omega^n}{z^\alpha_\mu}+\dots=0.
\end{split}
\end{equation}
The space of local functionals $\hat\cF$ is defined as
$\hat\Omega^{*,n}/d_H \hat \Omega^{*,n-1}$. Important information on
physical properties of the system is contained in $H^g(s,\hat\cF)$, the
local BRST cohomology groups in ghost number $g$ (see
e.g.~\cite{Barnich:2000zw} and references therein).

When considering the total differential of the bi-complex, $\tilde s= s+d_H$
with degree the sum of the ghost number and the form degree,  another
standard result is the isomorphism 
\begin{equation}
H^g(s,\hat\cF)\simeq
H^{g+n}(\tilde s,\hat\Omega)\label{eq:8}\,,
\end{equation}
where the representative of $H^g(s,\hat\cF)$ is obtained by extracting
the component of top form degree $n$ from a representative of
$H^{g+n}(\tilde s,\hat\Omega)$.

\subsection{Extended space of fields and basic operations}
\label{sec:parent}

If $\Psi^A(x)$ denote the fields of the original formulation, the
fields of the parent formulation are given by
$\Psi^A_{(\lambda)[\nu]}(x)$, where $(\lambda)$ denotes a symmetric
multi-index and $[\nu]$ a skew-symmetric one. The fields without
indices are identified with the fields of the original formulation,
$\Psi^A_{()[]}(x)\equiv \Psi^A(x)$. By introducing additional
Grassmann even variables $y^\lambda$, $\gh{y^\lambda}=0$ and Grassmann
odd variables $\theta^\nu$, $\gh{\theta^\nu}=1$, these fields can be
collected in a generalized superfield as follows
\begin{equation}
\begin{split}
\label{decomposition}
\Psi^A(x,y,\theta)&=\sum_{k=0} \sum_{l=0} \,\frac{1}{k!l!}\,\,
\Psi^A_{\lambda_1\dots
  \lambda_k|\nu_1\dots\nu_l}(x)
\theta^{\nu_l}
\dots\theta^{\nu_1}y^{\lambda_k}\dots
y^{\lambda_1} \\
&\equiv\Psi^A_{(\lambda)[\nu]}(x)\theta^{[\nu]}y^{(\lambda)} \,.
\end{split}
\end{equation}
Ghost numbers and parities of $\Psi^A_{(\lambda)[\nu]}(x)$ are then
assigned so that the total ghost number and parity of
$\Psi^A(x,y,\theta)$ is equal to that of $\Psi^A(x)$ by taking the
degrees of $\theta^\nu$ into account, $\gh{\Psi^A_{\lambda_1\dots
    \lambda_k|\nu_1\dots\nu_l}(x)}=\gh{\Psi^A(x)}-l$.  In the
algebraic approach described in the previous section, the $x^\mu$
dependence of the fields is replaced by considering the jet-bundle
coordinates $\Psi^A_{(\lambda)[\nu](\mu)}$ and
\begin{equation}
  \label{eq:17}
  \Psi^A(x,y,\theta)=\Psi^A_{(\mu)(\lambda)[\nu]}
\theta^{[\nu]}y^{(\lambda)}x^{(\mu)}\,. 
\end{equation}
In the parent formulation, the algebra of local functions is taken as
the algebra of functions in $x^\mu,\Psi^A_{(\mu)(\lambda)[\nu]}$,
where each function depends on a finite number of $x^\mu$ derivatives,
i.e., there is no dependence on fields with index $\mu_1\dots\mu_k$
with $k$ strictly greater than some integer.

Consider then the algebra $\cA$ of differential operators acting from
the right in the space of functions in
$x^\mu,y^\lambda,\theta^\nu$. By identifying
$\Psi^A_{(\mu)(\lambda)[\nu]}$ as elements of the basis dual to
$x^{(\mu)}\theta^{[\nu]}y^{(\lambda)}$, one naturally makes linear
functions in $\Psi^A_{(\mu)(\lambda)[\nu]}$ into a left $\cA$
module. Explicitly, if $\cO$  is the linear operator acting on
$x^{(\mu)}\theta^{[\nu]}y^{(\lambda)}$ and 
$\cO^F$ the associated linear operator acting on
$\Psi^A_{(\mu)(\lambda)[\nu]}$, we have
\begin{equation}
  \label{lr-def}
\begin{split}
  \cO^F \Psi^A(x,y,\theta)   = (-1)^{\p{\cO}\p{A}}\Psi^A(x,y,\theta)\cO\,,\\
  \cO^F_1\cO^F_2\Psi^A(x,y,\theta)=(-1)^{(\p{\cO_1}+\p{\cO_2})\p{A}}
\Psi^A(x,y,\theta)\cO_1\cO_2\,.
\end{split}
\end{equation} 
where $\p{\cO}$ and $\p{A}$ is the Grassmann parity of $\cO$ and
$\Psi^A(x,y,\theta)$ respectively.  One then extends this $\cA$-action
to generic functions in $\Psi^A_{(\mu)(\lambda)[\nu]}$ through the
graded Leibnitz rule as a vector field acting from the left, which we
continue to denote by $\cO^F$.

Because both the functions in $x,y,\theta$ and those in
$\Psi^A_{(\mu)(\lambda)[\nu]}$ are modules for $\cA$ considered as a
Lie algebra, the map $\cO \mapsto \cO^F$ respects the graded
commutator,
\begin{equation}
[\cO^F_1,\cO^F_2]=[\cO_1,\cO_2]^F\label{eq:13}\,.
\end{equation} 
Here $\Psi^A(x,y,\theta)[\cO_1,\cO_2]=(\Psi^A\cO_1)\cO_2-
(-1)^{\p{\cO_1}\p{\cO_2}}(\Psi^A\cO_2)\cO_1$.  Some details on the origin
of these conventions are given in the Appendix.

\subsection{BRST differential of parent theory}
\label{sec:diff-parent-theory}

If $s\Psi^A=S^A[x,\Psi]$ with $[s,\d_\mu]=0$ defines the BRST
differential of the original theory, which involves only the
coordinates referring to $x^\mu$ derivatives, the action of $\bar s$ on
$\Psi^A$, $\bar s \Psi^A=\bar S^A[x,\Psi]$ is defined by replacing in
$S^A[x,\Psi]$ the indices corresponding to $x^\mu$ derivatives with the
same indices corresponding to $y^\lambda$ derivatives. It thus follows that
$\bar S^A$ depends on $x^\mu$ and $\Psi^A$'s with indices
$(\lambda)$ corresponding to $y^\lambda$ variables, but no $(\mu)$ nor
$[\nu]$ indices. The action of $\bar s$ is then extended to
$\Psi^A_{(\mu)(\lambda)[\nu]}$ by requiring that
\begin{equation}
  \label{eq:14}
  [\bar s,\d_\mu]=0,\qquad [\bar
  s,\dover{^F}{\theta^\nu}]=0, \qquad [\bar
  s,\dover{}{x^\lambda}+\dover{^F}{y^\lambda}]=0\,. 
\end{equation}
Let $d=\theta^\mu\dl{x^\mu}$ and
$\sigma=\theta^\lambda\dl{y^\lambda}$.  If their action from the right
is defined by $\Psi^A d=\Psi^A\dr{x^\mu}\theta^\mu$ and $\Psi^A
\sigma=\Psi^A \dr{y^\mu}\theta^\mu$, where the arrow denotes right
derivatives, \eqref{lr-def} implies:
\begin{equation}
 d^F\Psi^A=d\Psi^A\,, \qquad \sigma^F\Psi^A=\sigma\Psi^A\,.
\end{equation} 
In particular, when acting on
$\Psi^A_{(\mu)(\lambda)[\nu]}$, $d^F$ and $\sigma^F$ remove an
index from the collection of $[\nu]$ indices and add it to the $(\mu)$
respectively the $(\lambda)$ indices,
\begin{equation}
\begin{gathered}
d^F\Psi^A_{\mu_1\dots\mu_k|(\lambda)|\nu_1\dots\nu_p}=
(-1)^Ap\Psi^A_{\mu_1\dots\mu_k[\nu_1|(\lambda)|\nu_2\dots\nu_p]}\,,\\
\sigma^F
\Psi^A_{(\mu)|\lambda_1\dots\lambda_k|\nu_1\dots\nu_p}=
(-1)^Ap\Psi^A_{(\mu)|\lambda_1\dots\lambda_k[\nu_1|\nu_2\dots\nu_p]}\label{eq:7}\,.
\end{gathered}
\end{equation}

The BRST differential of
the parent theory is the ghost number $1$ operator defined through
\begin{equation}
  \label{parent-BRST}
  s^P=d^F-\sigma^F+\bar s. 
\end{equation}
In order to show that $s$ is nilpotent, one only needs to show that 
\begin{equation}
 \commut{d^F-\sigma^F}{\bar s}=0\,,
\end{equation} 
since both $d^F-\sigma^F$ and $\bar s$ are nilpotent by construction.
Let us first note that by definition of $d^F$ and $\sigma^F$, one
obviously gets
\begin{equation}
 \commut{d^F-\sigma^F}{\bar s}\Psi^A_{(\mu)(\lambda)[\,]}=0\,,
\end{equation} 
if there are no antisymmetric $\nu$ indices. 
Furthermore, 
\begin{equation}
 \commut{\dover{^F}{\theta^\mu}}{\commut{d^F-\sigma^F}{\bar s}}
=\commut{\dlf{x^\mu}-\dlf{y^\mu}}{\bar s}=0\,,
\end{equation} 
where one has used \eqref{eq:13}, the second of \eqref{eq:14} and
finally the difference between the first and the third relation
of \eqref{eq:14}. It then follows that $\commut{d^F-\sigma^F}{\bar s}$
vanishes on all $\Psi^A_{(\mu)(\lambda)[\nu]}$.

To connect to other formulations, let us recall how physical fields,
equations of motion and gauge symmetries are encoded in the parent
formulation (see \cite{\BGST,Barnich:2005ru} for more details).  The
physical fields are among the ghost number $0$ fields where one in
general also finds auxiliary fields and pure gauge degrees of freedom.
In particular, if $\gh{A}$ denotes the ghost degree of $\Psi^A$, then
there are no physical fields associated to $\Psi^A$ if $\gh{A}<0$, and
for $\gh{A_l}=l\geq 0$, there are some among the ghost number $0$
fields $\Psi^{A_l}_{(\lambda)\nu_1\ldots\nu_l}$.  Denoting all the
parent formulation fields at ghost degree $l$ by $\Psi^{\alpha_l}$ the
equations of motion and gauge transformations for the ghost number $0$
fields can be written as
\begin{equation}
  (s^P\Psi^{\alpha_{-1}})\big|_{\Psi^{\alpha_k}=0\,\text{for}\,  k\neq 0}=0\,,
  \qquad
\delta \Psi^{\alpha_0}=(s^P \Psi^{\alpha_0})\big|_{\Psi^{\alpha_k}=0\,\text{for}\, k \neq 0,1}\,,
\end{equation} 
where for the former, all fields with ghost number different from zero
are put to zero, while for the latter, one keeps in addition to the
ghost number $0$ fields, those in ghost number $1$ which are replaced
by gauge parameters. In a similar way, reducibility relations for
equations of motion and for gauge transformations can be read off from
$s^P$ in the sector of fields of higher positive and negative ghost
numbers.

\subsection{AKSZ-type sigma model}
\label{sec:aksz-type-sigma}

The structure of the parent theory BRST
differential~\eqref{parent-BRST} is very similar to that defining AKSZ
sigma models~\cite{Alexandrov:1997kv}.  More precisely, the BRST
differential of the non-Lagrangian version of an AKSZ sigma model is
defined by
\begin{equation}
 s^{AKSZ}\Psi^a(x,\theta)=d \Psi^a(x,\theta)+Q^a(\Psi(x,\theta))\,,
\end{equation} 
where $Q^a$ are the components of an odd nilpotent vector field on the
space with coordinates $\Psi^a$.

It is then straightforward to see that an AKSZ sigma model
corresponds to a parent theory for which all fields with $\lambda$
indices related to $y^\lambda$ derivatives vanish and $\sigma^F$ is
absent. Furthermore, $\bar s$ is required not to involve the
space-time coordinates explicitly. In other words, $\bar s \Psi^a$ is
restricted to be a function of $\Psi^b$ alone.

Whereas one can freely change coordinates $x^\mu$ of the base space of
an AKSZ sigma model without affecting the differential provided the
$\theta$ prolongations of the fields $\Psi^a$ transform tensorially,
this is no longer true for a generic parent differential due to the
$\sigma^F$ term. Note however that a parent differential, for which
this term is absent and there is no explicit $x^\mu$ dependence, is of
AKSZ-type if one takes the indices $a$ to be given by the collection
$A(\lambda)$. As we will show below, this is the case for
diffeomorphism invariant theories, after a suitable field
redefinition. It thus follows that changes of coordinates in the base space
together with the associated tensorial transformation laws for the
fields, do not affect this type of parent differentials either.

\subsection{Linear theories and first quantized description}
\label{sec:linear-theories}

Let us illustrate the construction for linear theories and connect to
the formulation given in~\cite{Barnich:2004cr}. As a first step, one
introduces an auxiliary superspace $\cH$ whose basis elements $e_A$
are associated to the fields $\Psi^A$ and defines
$\gh{e_A}=-\gh{\Psi^A}$, $\p{e_A}=-\p{\Psi^A}$.  The space of
$\cH$-valued space-time functions can be then regarded as the space of
states of a BRST first-quantized system. Indeed, using the string
field $\Psi(x)=\Psi^A(x) e_A$, one can define the first quantized BRST
operator acting from the right according to
\begin{equation}
 s \Psi=  \Psi\brst\,,\qquad (\phi^A e_A)\brst
=\phi^A(x)\brst_A^B(\dr{x},x)e_B\,.\label{1stqant}
\end{equation} 
The nilpotency of $s$ then implies the nilpotency of the operator
$\brst$ and vice-versa.  In addition $\brst$ carries a unit ghost
number and hence determines a first quantized BRST system. The
starting point field theory then appears as the gauge field theory
associated to this first-quantized system \cite{Barnich:2003wj}.

In first quantized terms, the parent theory is obtained by extending
$\cH$ to $\cH^T$ through tensoring with the Grassmann algebra
generated by variables $\theta^\mu$ and formal power series in
$y^\mu$. On the $\cH^T$-valued functions, the BRST operator that gives
rise to the parent differential $s^P$ through \eqref{1stqant} is
defined by 
\begin{equation}
 \brst^T=d-\sigma+\bar\brst\,,
\end{equation}  
where $\bar\brst$ denotes the starting point BRST operator $\brst$
extended to act on $\cH^T$ and with $\dl{x^\mu}$ replaced with
$\dl{y^\mu}$ and $x^\mu$ with $x^\mu+y^\mu$.  

\subsection{Reductions}
\label{sec:reduction}

The usefulness of the parent theory has to do with the possibility to
arrive at other equivalent formulations just by eliminating one or
another set of generalized auxiliary fields. A practical way to
identify such fields is obtained by relating them to a part of the
parent differential $s^P$ using standard homological techniques.

\subsubsection{Generalized auxiliary fields and algebraically trivial
  pairs}
\label{sec:gene-auxil-fields}

Let us briefly recall the notion of generalized auxiliary fields at
the level of equations of motion.  Suppose that, after an invertible
change of coordinates possibly involving derivatives, the set of
fields $z^\alpha$ splits into $\varphi^i,w^a,v^a$ such that the
equations $sw^a|_{w^a=0}=0$, understood as algebraic equations in the
space of fields and their derivatives, are equivalent to
$v^a=V^a[\varphi]$ in the sense that they can be algebraically
solved for fields $v^a$. Fields $w,v$ are then called generalized
auxiliary fields. In the Lagrangian framework, fields $w,v$ are in
addition required to be second-class constraints in the antibracket
sense.  In this context, generalized auxiliary fields were originally
proposed in~\cite{Dresse:1990dj}. Generalized auxiliary fields
comprise both standard auxiliary fields and pure gauge degrees of
freedom as well as their associated ghosts and antifields.

As explained in section 3.2 of \cite{Barnich:2004cr}, there is a
reduced differential associated to the surface defined by the
equations 
\begin{equation}
  \label{eq:19}
  w^a=0,\qquad v^a-V^a[\varphi]=0\,.
\end{equation}
This reduced differential is defined on the space of fields
$\varphi^i$ and their derivatives through $s_R
\varphi^i=s\varphi^i|_{w^a=0,\, v^a=V^a[\varphi]}$. 

By following the reasoning in the proof of proposition~{\bf 3.1} of
\cite{Barnich:2004cr}, it can then be shown that there exists an
invertible change of fields from $z^\alpha$ to
$w^a,sw^{a},\varphi^i_R$ such that $s\varphi^i_R=
S_R^i[\varphi_R]$. If in addition this change of variables is local,
the fields $w^{ a},sw^{a}$ are called algebraically trivial
pairs. Conversely, it follows directly from the form of the
differential in these variables that $w^{ a}$ and $sw^{a}$ are
generalized auxiliary fields. Finally, it can easily be shown that
$\varphi^i_R$ and $\varphi^i$ differ only by terms that vanish when
$w^a,sw^a$ vanish and then that $s\varphi^i_R$ and $s_R \varphi^i$
agree when $w^a$ and $sw^a$ vanish. In other words, in this case the
reduced theories are identical and the concepts of algebraically
trivial pairs and generalized auxiliary fields are the same.

If we restrict to local generalized auxiliary fields, the BRST
cohomology, both the standard and the one modulo the horizontal
differential $d_H$, of $s$ involving all the fields $z^\alpha$ is
isomorphic to the one involving the fields $\varphi^i_R$ alone, or,
what is the same, to the one of $ s_R$ involving $\varphi^i$
alone. Since these cohomology classes contain relevant physical
information (see e.g.~\cite{Barnich:2000zw} for details and
references), it is natural to consider as equivalent the original and
the reduced theories.

In many cases the new variables $\varphi_R^i$ are indeed local
functions. This includes for instance all linear systems but it does
not need to be so in general. Indeed, the $\varphi^i_R$ are constructed
as power series in the variables $w^a,v^a$ which do not necessarily
terminate or sum up to local functions. The elimination of
generalized auxiliary fields is then not a strictly local procedure
and can in principle affect the local BRST cohomology groups. Except
for the case of parametrized theories discussed in
Section~\bref{sec:param}, we only consider local generalized auxiliary
fields. The generalized auxiliary fields relating the parametrized and
non-parametrized formulations of the same theory are manifestly
nonlocal. This is so because through parametrization, one reformulates the
theory in terms of constants of motion which are by construction
nonlocal expressions of the original variables (see
e.g.~\cite{Henneaux:1992ig}).

\subsubsection{Degrees}
\label{sec:degrees}

By using appropriate degrees, generalized auxiliary fields can be
identified by focusing on only a part of the BRST differential. 

A technical assumption satisfied in all models of interest is that the
original BRST differential and thus also the parent differential $s^P$
does not contain constant terms, or in other words, that there is no
term of degree $-1$ in an expansion in terms of homogeneity in the
fields.

Let us more generally assume that the space of fields carries a
suitable degree such that the degree of the independent fields is
bounded from above and that the decomposition of the BRST differential
has a lowest degree $s_p$, which we take for definiteness to be
$p=-1$,
\begin{equation}
 s=s_{-1}+s_0+\ldots\qquad \deg{s_k}=k,
\end{equation} 
where $s_{-1}$ commutes with the total derivative $\d_\mu$. Note that
the considerations below remain true for different values of $p$. We
then have:

\begin{prop}
  Algebraically  trivial pairs for $s_{-1}$ are generalized
  auxiliary fields of the theory determined by $s$.
\end{prop}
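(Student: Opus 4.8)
The plan is to exhibit the change of coordinates explicitly and then invoke the reduction machinery already set up for a single differential. Suppose $w^a, s_{-1}w^a$ form an algebraically trivial pair for $s_{-1}$, so there is an invertible (local) change of variables from the $z^\alpha$ to $w^a, s_{-1}w^a, \varphi^i_R$ with $s_{-1}\varphi^i_R$ expressible in terms of the $\varphi_R$ alone. First I would record that, in these variables, $s_{-1}$ acts as the ``contractible'' differential $w^a \mapsto s_{-1}w^a \mapsto 0$ on the pair, plus an autonomous piece on the $\varphi_R$. The key observation is that the degree assumption lets us treat the full differential $s = s_{-1} + s_0 + \dots$ as a perturbation of $s_{-1}$: since the degree of the independent fields is bounded above and $s_{-1}$ has degree $-1$, a standard homological perturbation / spectral sequence argument (exactly the kind of reasoning invoked for proposition 3.1 of \cite{Barnich:2004cr}) shows that $w^a = 0$ together with the full equations $s w^a|_{w=0}=0$ can still be algebraically solved for the partners $v^a := s_{-1}w^a$ in the form $v^a = V^a[\varphi]$, now with corrections of higher degree.

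Concretely, I would argue as follows. Write $v^a$ for $s_{-1} w^a$. The equation $s w^a|_{w^a = 0} = 0$ reads $v^a + (\text{terms of degree} \geq \deg v^a + 1)|_{w=0} = 0$, because $s_{-1}w^a = v^a$ is the lowest-degree contribution and all of $s_0 w^a, s_1 w^a, \dots$ have strictly higher degree. Since the fields have degrees bounded from above, one can solve this triangular system recursively, degree by degree, for $v^a$ as a local function $V^a[\varphi]$ of the remaining fields $\varphi^i$ (those among the original $z$'s that are neither $w$'s nor $v$'s), the recursion terminating because of the upper bound on degrees. This is precisely the statement that $w^a, v^a$ constitute generalized auxiliary fields for $s$ in the sense recalled in Section~\bref{sec:gene-auxil-fields}: after the invertible, derivative-involving change of coordinates to $\varphi^i, w^a, v^a$, the equations $s w^a|_{w=0}=0$ are equivalent to $v^a = V^a[\varphi]$. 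The requirement that $w^a, s_{-1}w^a$ be an \emph{algebraically} trivial pair (not merely generalized auxiliary) is what guarantees the change of coordinates is local, which is needed so that the whole procedure stays within the local theory.

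The main obstacle I anticipate is making the ``solve degree by degree'' step rigorous: one must be careful that the Jacobian of the map $v \mapsto s w|_{w=0}$, viewed as a map of fields and their derivatives, is invertible as a formal object — at lowest degree it is the identity (since $s_{-1}w^a = v^a$), so invertibility is automatic order by order, but one should check that no infinite descent occurs, which is exactly where the boundedness-from-above of the degree of the independent fields enters. A secondary point worth spelling out is that this proposition applies verbatim to the parent differential $s^P = d^F - \sigma^F + \bar s$: by the technical assumption of Section~\bref{sec:degrees}, $s^P$ has no degree $-1$ (constant) term and admits an expansion $s^P = s^P_{-1} + s^P_0 + \dots$ with $s^P_{-1}$ commuting with $\partial_\mu$, so any algebraically trivial pair for $s^P_{-1}$ yields generalized auxiliary fields of the parent theory — which is the form in which the proposition will actually be used in the reductions that follow.

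Once the solvability is established, I would close the argument by the reasoning already in the excerpt: passing to the variables $w^a, s w^a, \varphi^i_R$ (with $\varphi^i_R$ differing from $\varphi^i$ only by terms vanishing when $w^a, sw^a$ vanish), one gets $s\varphi^i_R = S^i_R[\varphi_R]$, so the reduced differential $s_R\varphi^i = s\varphi^i|_{w=0,\,v=V[\varphi]}$ is well defined and, by the discussion in Section~\bref{sec:gene-auxil-fields}, the BRST cohomology — both the ordinary one and the one modulo $d_H$ — is preserved. Hence $w^a$ and $s_{-1}w^a$ are indeed generalized auxiliary fields of the theory determined by $s$, completing the proof.
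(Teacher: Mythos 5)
Your proposal is correct and follows essentially the same route as the paper: write $sw^a|_{w=0}=v^a+\sum_{k\geq 0}s_kw^a|_{w=0}$, note that the correction terms have degree at least $\deg v^a+1$ and hence can only involve $v$-variables of strictly higher degree, and solve the resulting triangular system by descending induction on the degree (starting from the maximal one, which exists by the boundedness assumption), then order by order in homogeneity in $w,v$. The extra remarks on the Jacobian, the application to $s^P$, and the reduced differential are consistent with the paper's surrounding discussion and do not change the argument.
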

\proof{
Indeed, by assumption there is a coordinate system $(w^a,v^a,
\varphi^i_R)$, such that $s_{-1}w^a=v^a$ and
$s_{-1}\varphi^i_R= S_R^i[\varphi_R]$. It follows that
\begin{equation}
 sw^a=v^a+\sum_{k=0} s_k w^a.
\end{equation} 
Equations $sw^a|_{w^a=0}=0$ can be uniquely solved with respect to
$v^a$. To see this, suppose that $v^{a_m}$ are the variable(s) having
the maximal degree $m$ among $v^a$-variables. Variables $w^{a_m}$ have
degree $m+1$ while the terms $\sum_{k=0} s_k w^{a_m}$ have degree $m+1$
or higher and hence the linear part of these terms cannot involve
$v$-variables. Repeating the argument for degree $m-1$ and
lower shows that $sw^a|_{w^a=0}=0$ can be solved to linear order.
In the space of formal series in variables $w,v$ the
equation $sw^{a}|_{w^{a}=0}=0$ can be then uniquely solved with
respect to $v^{a}$ order by order.\qed }

The theory determined by $s$ can thus be reduced to the one involving
the fields $ \varphi^i_R$ alone by eliminating generalized
auxiliary fields.  Suppose now that $s_{-1}$ takes the form
$s_{-1}=v^a\dl{w^a}$. If in addition the cohomology of $s_{-1}$ is
concentrated in degree $0$,  the BRST differential of the reduced
theory has the particularly simple form
\begin{equation}
   s_R \varphi^i_R=(s\varphi^i_R)|_{v=0}=(s_0 
\varphi^i_R)|_{v=0}.
  \end{equation} 
 Indeed, in this case, the equation $sw^a|_{w^a=0}=0$ is solved by
$v^a=0$. This is so because its solution, which has the form
$v^a=V^a[\varphi_R]$ for some local functions $V^a$, implies that
those $v^a$ that have nonvanishing degree vanish as the degree of all
the $\varphi^i_R$ is zero. Let now $v^{a_0}$ denote those $v^a$
that have vanishing degree. Equation $sw^{a_0}|_{w=0}=0$ gives $
-v^{a_0}=(s_0 w^{a_0}+s_1 w^{a_0}+\ldots)|_{w^a=0}$. But all the terms
on the right hand side have positive degree and hence cannot be local
functions of $\varphi^i_R$ unless they vanish.

Note that it can be useful not to identify and eliminate the variables
$v^{a_0}$ with vanishing degree explicitly and to keep them in the
reduced theory.  Indeed, $s_0 v^{a_0}$ can only depend on variables
$\varphi^i_R$ and $v^{a_0}$. By eliminating all variables $w,v$
but $v^{a_0}$, the reduced theory is determined by the BRST
differential $s_0$ and the constraints $v^{a_0}=0$.

\subsubsection{Target space reductions}
\label{sec:targ-space-reduct}

For the case of the parent theory, consider the negative of the
target-space ghost number, i.e., the prolongation to the parent theory
of the original ghost number that does not take into account the
number of $\nu$ indices. It follows that $s^P=s^P_{-1}+s^P_{0}$, where
$s^P_{-1}=\bar s$ and $s^P_0=d^F-\sigma^F$. Using this degree which is
bounded from below, it follows in particular that:
\begin{prop}\label{prop:2.2}
  Algebraically trivial pairs for the original BRST differential $s$
  give rise in the parent formulation to a family of generalized
  auxiliary fields comprising all their descendants obtained through
  $\dover{^F}{\theta^\nu}$ and $\dover{^F}{y^\lambda}$
  derivatives.
\end{prop}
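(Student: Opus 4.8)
The plan is to invoke the general criterion established just above: taking the degree to be minus the target-space ghost number --- which is bounded from below and for which $s^P=s^P_{-1}+s^P_0$ with $s^P_{-1}=\bar s$ and $s^P_0=d^F-\sigma^F$ --- algebraically trivial pairs of $\bar s$ are generalized auxiliary fields of the parent theory, the relevant hypothesis being that $\bar s$ commutes with $\d_\mu$, which is the first relation of \eqref{eq:14}. It therefore suffices to produce, out of a given algebraically trivial pair $(w^a,sw^a)$ of the original differential $s$ with complementary reduced coordinates $\varphi^i_R$ satisfying $s\varphi^i_R=S_R^i[\varphi_R]$, a corresponding algebraically trivial pair of $\bar s$ on the full parent jet space, and then to identify its members with the announced descendants.

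The central observation is that $\bar s$ is, by construction, a faithful copy of $s$: on the fields carrying only $y$-derivative indices it acts exactly as $s$ acts on the original fields and their $x$-derivatives, under the relabeling that turns each $x$-derivative index into a $y$-derivative index and, correspondingly, the total spacetime derivative $\d_\mu$ into the total $y$-derivative $D_\lambda:=\dover{}{x^\lambda}+\dover{^F}{y^\lambda}$ --- this is exactly why the third relation of \eqref{eq:14} requires $[\bar s,D_\lambda]=0$. Hence the coordinate change underlying the original trivial pair transports verbatim: writing $\bar w^a$ and $\bar\varphi^i_R$ for the local functions obtained from $w^a$ and $\varphi^i_R$ through this relabeling, the $D_\lambda$-prolongations of $\bar w^a$, $\bar s\bar w^a$ and $\bar\varphi^i_R$ form an invertible local change of coordinates for the $y$-jet sector, with $\bar s\bar\varphi^i_R=\bar S_R^i[\bar\varphi_R]$.

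It then remains to extend this to the fields carrying $\mu$- and $\nu$-indices. The $\mu$-indices are supplied by $\d_\mu$: since $[\bar s,\d_\mu]=0$ the spacetime prolongation of the pair stays trivial, and since eliminating a field over spacetime automatically eliminates all of its $\d_\mu$-derivatives these need not be listed --- this is also why the statement mentions $\dover{^F}{y^\lambda}$ rather than $D_\lambda$, the two differing only by a $\d_\mu$-derivative. The $\nu$-indices are supplied by $\dover{^F}{\theta^\nu}$: since $[\bar s,\dover{^F}{\theta^\nu}]=0$ by the second relation of \eqref{eq:14}, and since these operators collectively identify, up to signs, the fields carrying a given number of $\nu$-indices with those carrying fewer, the trivial-pair structure carries over from one $\nu$-level to the next. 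Repackaging the resulting family into the superfield $W^a(x,y,\theta)$ obtained by substituting the parent superfields into $w^a$, one reads off that the generalized auxiliary fields produced are precisely $w^a$, $sw^a$ and all of their descendants under repeated application of $\dover{^F}{\theta^\nu}$ and $\dover{^F}{y^\lambda}$.

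I expect the main obstacle to be the bookkeeping in this last step, namely checking that the $\dover{^F}{\theta^\nu}$- and $\dover{^F}{y^\lambda}$-descendants of $\bar w^a$ and $\bar s\bar w^a$, together with the corresponding descendants of $\bar\varphi^i_R$, genuinely form an invertible and local change of coordinates on the entire parent jet space. Invertibility should follow from a triangular structure with respect to the total number of $y$- and $\theta$-indices --- the lower-order corrections being under control because each $w^a$ depends on finitely many jets --- and locality from the same finiteness. A minor point to settle along the way is the explicit spacetime dependence that $w^a$ may carry, which is handled most cleanly either by assuming it away or by absorbing the mismatch between $D_\lambda$ and $\dover{^F}{y^\lambda}$ into those triangular corrections.
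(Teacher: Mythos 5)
Your proposal is correct and is essentially the argument the paper intends: Proposition \ref{prop:2.2} is stated there without an explicit proof, as a direct consequence of the degree criterion of the preceding subsection applied with $s^P_{-1}=\bar s$, and your transport of the original algebraically trivial pairs to $\bar s$ via the $x\to y$ relabeling, followed by closing the coordinate change under $\dover{^F}{\theta^\nu}$- and $\dover{^F}{y^\lambda}$-descendants using the commutation relations \eqref{eq:14}, is exactly the implicit reasoning. The only (harmless) slip is that $D_\lambda$ and $\dover{^F}{y^\lambda}$ differ by the explicit derivative $\dl{x^\lambda}$ rather than by a total derivative $\d_\mu$, which does not affect the identification of the family of new jet coordinates.
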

In this case, the reduction of the fields
$\Psi^A_{(\mu)(\lambda)[\nu]}$ thus involves only the $A$
indices and the reduced theory is simply the parent extension of the
reduction $s_R$ for the original BRST differential $s$.

Furthermore, consider trivial pairs for the BRST differential $s$ that
are not necessarily algebraic. In other words, the separation of the
jet-coordinates into trivial pairs and the remaining coordinates does
not respect the differential structure encoded in $\d_\mu$.  In terms
of the parent theory, the same trivial pairs for $\bar s$ in terms of
$y^\lambda$ derivatives are now algebraic as they do not involve the
$x^\mu$ derivatives at all. Of course, the separation now does not
respect the differential structure with respect to the $y^\lambda$
derivatives, but still does for the $\theta^\nu$ derivatives. We thus
also have:
\begin{prop}
  Trivial pairs for $s$ that are not necessarily algebraic give rise
  to a family of generalized auxiliary fields of the parent theory
  which comprises all their descendants obtained through
  $\dlf{\theta^\nu}$-derivatives.
\end{prop}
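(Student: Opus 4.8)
The plan is to proceed exactly as in the algebraic case, but now working inside the parent formulation where the roles of the $x^\mu$ and $y^\lambda$ derivatives have been exchanged. First I would recall the data of a (not necessarily algebraic) trivial pair for the original differential $s$: there is an invertible, possibly derivative-dependent change of jet-coordinates from $z^\alpha$ (together with all their $x^\mu$-derivatives) to variables $w^a, sw^a, \varphi^i_R$, with $s\varphi^i_R = S^i_R[\varphi_R]$. The point of ``not necessarily algebraic'' is that this change does not commute with $\d_\mu$, so $w^a$ and $sw^a$ are not algebraically trivial and hence Proposition~\bref{prop:2.2} does not apply directly.

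Next I would transport this data to the parent side. By the definition of $\bar s$ in Section~\bref{sec:diff-parent-theory}, $\bar s$ is obtained from $s$ by replacing $x^\mu$-derivative indices with $y^\lambda$-derivative indices; so the same change of variables, now written in terms of the $\Psi^A_{(\lambda)}$ and their $y^\lambda$-derivatives, produces variables $\bar w^a, \bar s \bar w^a, \bar \varphi^i_R$ with $\bar s \bar\varphi^i_R = \bar S^i_R[\bar\varphi_R]$. The crucial observation is that this change of variables involves only $y^\lambda$-derivatives and the $A$-indices; it does not touch the $x^\mu$-derivative indices $(\mu)$ at all, nor the antisymmetric $\theta^\nu$-indices $[\nu]$. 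Hence in the parent jet-space --- whose local functions depend on finitely many $x^\mu$-derivatives of the fields $\Psi^A_{(\mu)(\lambda)[\nu]}$ --- the pair $\bar w^a, \bar s\bar w^a$ \emph{is} algebraic: solving $\bar s \bar w^a|_{\bar w^a=0}=0$ for the partner variables requires no $x^\mu$-derivatives and no $\theta^\nu$-structure.

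I would then apply the target-space degree already introduced (minus the target-space ghost number, so that $s^P = s^P_{-1}+s^P_0$ with $s^P_{-1}=\bar s$), together with the Proposition whose proof appears just above: algebraically trivial pairs for $s^P_{-1}=\bar s$ are generalized auxiliary fields of the parent theory determined by $s^P$. Since $\bar w^a,\bar s\bar w^a$ are algebraic in the parent jet-space, so are all their images under the commuting operators $\dover{^F}{\theta^\nu}$ (this commuting is precisely the middle relation of \eqref{eq:14}); applying $\dover{^F}{\theta^\nu}$ repeatedly to a trivial pair produces again algebraically trivial pairs, because $\bar s$ commutes with $\dover{^F}{\theta^\nu}$ and these derivatives act freely on the skew-symmetric index slots. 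Collecting $\bar w^a, \bar s\bar w^a$ together with all their $\dlf{\theta^\nu}$-descendants thus yields a family of algebraically trivial pairs for $\bar s$, hence generalized auxiliary fields for $s^P$, which is the claim.

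The main obstacle, and the one point I would treat carefully, is the interaction with the $y^\lambda$-derivative structure. Unlike in Proposition~\bref{prop:2.2}, the separation into $\bar w^a, \bar s\bar w^a, \bar\varphi^i_R$ does \emph{not} respect $\dover{}{x^\lambda}+\dover{^F}{y^\lambda}$, so one cannot also throw in the $\dover{^F}{y^\lambda}$-descendants and stay within trivial pairs; the statement of the Proposition is deliberately weaker, mentioning only $\dlf{\theta^\nu}$-descendants. I would check that this weaker closure is exactly what the algebraic-triviality argument delivers: the degree $s^P_{-1}=\bar s$ lowers the target-space degree, that degree is bounded below, and the solvability argument of the Proposition above (the order-by-order inversion of $\bar s\bar w^a|_{\bar w^a=0}=0$) goes through verbatim for the enlarged set, since adding $\theta^\nu$-derivatives only enlarges the index range without introducing $x^\mu$- or $y^\lambda$-derivative obstructions. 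Everything else --- nilpotency of $s^P$, the fact that eliminating these fields reduces $s^P$ to the parent extension of some reduced differential --- is inherited from the general reduction machinery already set up.
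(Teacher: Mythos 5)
Your proposal is correct and follows essentially the same route as the paper: the non-algebraic trivial pairs for $s$ become algebraic for $\bar s$ in the parent jet-space because the derivative dependence has been traded from $x^\mu$- to $y^\lambda$-indices, the separation still respects the $\theta^\nu$-derivative structure (hence the closure under $\dlf{\theta^\nu}$-descendants only, not $\dover{^F}{y^\lambda}$-descendants), and the conclusion follows from the degree for which $s^P_{-1}=\bar s$ together with the earlier proposition on algebraically trivial pairs for the lowest-degree part. Your elaboration of why the $y^\lambda$-descendants must be excluded is exactly the point the paper makes in the paragraph preceding the proposition.
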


\subsubsection{Going on-shell}
\label{sec:going-shell}

Consider now a gauge theory described by an antifield dependent BRST
differential that is expanded according to the antifield number,
\begin{equation}
 s=\delta+\gamma+s_1+\ldots\,,
\end{equation} 
where $\delta$ denotes the Koszul-Tate differential
\cite{Fisch:1990rp} which satisfies standard regularity assumptions
(see \cite{Henneaux:1991rx} for the case of local field theories).  In
this case, $\delta=v^a\dover{}{w^a}$ but the point is that the trivial
pairs for $\delta$ are not algebraic. The contracting homotopy $\rho =
w^a\dover{}{v^a}$ does not commute with the total derivative $\d_\mu$
\cite{Barnich:1994db}. If one concentrates on the BRST cohomology in
the space of local functions or horizontal forms, this is not an issue
and the cohomology of $\delta$ is indeed concentrated in degree $0$.
Not eliminating $v^{a_0}$ explicitly corresponds to the case of
considering the weak cohomology of $\gamma$, i.e., the cohomology of
$\gamma$ modulo the relations imposed by the equations of
motions. Note however that in the case of local functionals there
generically is non-trivial cohomology of $\delta$ in strictly negative
degree because the trivial pairs are not algebraic, see
e.g.~\cite{Barnich:2000zw,Barnich:1999cy} for the relation between the
cohomology of $s$ and $\gamma$ in this case.

For the parent theory, using the degree given by minus the
target-space ghost number in a first stage, and then the extension of
the antifield number to the parent theory in a second stage, the piece
$\bar \delta$ of $\bar s$ that corresponds to the prolongation of
$\delta$ is in lowest degree. Moreover the degree of fields is bounded
from above.  In terms of a local field theory in $x$-space, the
elimination of the trivial pairs for its cohomology is now algebraic,
but of course does not respect the $\lambda$ indices in the sense that
it does not commute with $\dover{^F}{y^\lambda}$ in the case of non
trivial equations.

Since the cohomology of $\bar\delta$ is concentrated in antifield
number zero, neither the $\mu$ nor the $\nu$ indices are affected, and
the reduced theory has a differential whose part involving space-time
derivatives $\dover{^F}{x^\mu}$ is still $d^F$ alone, the remainder of
the differential is of antifield number $0$ and no fields of antifield
number different from zero remain. The theory is reduced to the
prolongation of the stationary surface and the variables $v^{a_0}$ are
precisely the lhs of the equations of motion expressed in terms of
$y^\lambda$ derivatives. Keeping them in the formulation allows one not
to choose explicitly independent coordinates on the stationary
surface. Moreover, with these variables kept, the reduced differential
becomes simply $d^F-\sigma^F+\bar\gamma$.
 
\subsubsection{Equivalence of parent and original theory}
\label{sec:equiv-parent-orig}

The main statement that justifies the introduction of the parent
theory is: 
\begin{prop}
  The parent theory determined by $s^P$ can be reduced to the starting
  point theory through the elimination of the generalized auxiliary
  fields.
\end{prop}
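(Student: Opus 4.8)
The plan is to identify the parent-theory fields that form generalized auxiliary fields, realize them as algebraically trivial pairs for an appropriate lowest-degree piece of $s^P$, and then apply Proposition \bref{prop:2.2} together with the reduction machinery of Section \bref{sec:gene-auxil-fields}. Concretely, I would use the filtration by the number of $\nu$ indices (equivalently, minus the target-space ghost number) so that $s^P=s^P_{-1}+s^P_0$ with $s^P_{-1}=\bar s$ and $s^P_0=d^F-\sigma^F$, exactly as already set up above. The strategy is then to perform the reduction in stages, peeling off the extra fields $\Psi^A_{(\mu)(\lambda)[\nu]}$ until only $\Psi^A_{()[ ]}=\Psi^A$ remains.

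First I would argue that the fields carrying at least one $\theta$-index can be removed. Look at $\sigma^F$: it acts by moving a $\nu$ index onto the $(\lambda)$ string, and on the relevant subspace it is (up to signs) the Koszul-type operator $\theta^\lambda\partial/\partial y^\lambda$ transcribed to the fields. The pairs $(\Psi^A_{(\mu)\lambda(\lambda')[\nu]},\,\Psi^A_{(\mu)(\lambda')[\lambda\nu]})$, i.e. a field with a $y$-index together with the field obtained by pulling that index back to a $\theta$-index, form contractible pairs for $-\sigma^F$; its cohomology is concentrated in the subspace with no $y$-indices, i.e. on the $\Psi^A_{(\mu)[\,]}$ and $\Psi^A_{(\mu)[\nu]}$ with the $\nu$-content restricted appropriately — in fact, the standard computation of $H(\sigma)$ on polynomials in $y,\theta$ shows the cohomology is one-dimensional, spanned by the constant, so $H(-\sigma^F)$ sits entirely on the fields $\Psi^A_{(\mu)[\,]}$ with no $y$ and no $\theta$ indices. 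Hence $-\sigma^F$ supplies a complete set of algebraically trivial pairs whose elimination removes every field with a $y$- or $\theta$-index, via the mechanism of Section \bref{sec:degrees}: with a degree counting the number of $y$ plus $\theta$ indices, $\sigma^F$ is the lowest piece, it has the form $v^a\partial/\partial w^a$, and its cohomology is in degree zero, so the reduced differential is read off by setting the $v$'s to zero.

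Next I would compute what survives. After eliminating all fields with $y$ or $\theta$ indices, the remaining fields are $\Psi^A_{(\mu)[\,]}=\Psi^A_{(\mu)}$, i.e. precisely the original jet coordinates $\Psi^A$ and their $x^\mu$-derivatives. On these, the piece $d^F$ acts trivially (there are no $\theta$-indices left to lower), $\sigma^F$ has been used up, and the key point is that the constraint equations coming from $s^P w^a|_{w=0}=0$ reconstruct, order by order in the degree, exactly the prolongation conditions $\Psi^A_{(\mu)\lambda}= \partial_\mu\Psi^A_{(\lambda)}$-type relations and express $\bar s$ back in terms of the ordinary $\partial_\mu$. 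More precisely, the solution of the triviality equations identifies $\Psi^A_{(\mu)}$ with $\partial_{(\mu)}\Psi^A$ up to $w,v$-dependent terms, so that $\bar s$, which was built from $S^A$ by replacing $x$-derivative indices with $y$-derivative indices, goes back to $s$ with $\partial_\mu$ the honest total derivative. The reduced differential on the $\Psi^A$ is therefore $s_R\Psi^A = S^A[x,\Psi]$, i.e. the original BRST differential, and by the general statements of Section \bref{sec:gene-auxil-fields} the local BRST cohomology (both absolute and modulo $d_H$) is unchanged.

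The main obstacle I anticipate is bookkeeping in the two-stage reduction: one must be careful that eliminating the $\theta$/$y$-fields via $\sigma^F$ does not reintroduce or obstruct the subsequent identification of $\Psi^A_{(\mu)}$ with $\partial_{(\mu)}\Psi^A$, i.e. that the two sets of trivial pairs are compatible and the reduction really can be done sequentially. The clean way to handle this is to do it in one step using the combined degree and invoke Proposition \bref{prop:2.2} directly — the trivial pairs are precisely the $\theta$-descendants and $y$-descendants of the (trivial) pair consisting of a field and its value, so the reduced parent theory is the parent extension of $s_R$, and since $s$ itself has no further trivial pairs to eliminate, $s_R=s$. One should also remark, as the surrounding text already does, that this argument is strictly local here because all the relevant $\varphi^i_R$ are visibly local (they are just the original jet coordinates), so no subtlety about non-local elimination arises, except in the parametrized case treated separately in Section \bref{sec:param}.
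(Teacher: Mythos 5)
Your overall strategy is the one the paper follows --- realize the extra fields as contractible pairs for $\sigma^F$ via the homotopy $\rho=\dr{\theta^\mu}y^\mu$, promote them to generalized auxiliary fields of the full $s^P$ by a degree argument, solve $s^Pw^a|_{w=0}=0$ to identify the pure $y$-derivative fields with $\d_{(\lambda)}\Psi^A$, and conclude that the reduced differential is $s$. However, the degree you use to make $-\sigma^F$ the lowest piece does not work, and this is the technical heart of the proof. With minus the target-space ghost number (your first choice) the lowest piece is $\bar s$, not $\sigma^F$. With the number of $y$ plus $\theta$ indices (your second choice) $\sigma^F$ has degree \emph{zero} --- it trades a $\theta$ index for a $y$ index --- while $d^F$ has degree $-1$ and the nonlinear, higher-derivative parts of $\bar s$ have components down to degree $-T$; moreover this degree is unbounded above on the fields, so Proposition~2.1 of Section~\bref{sec:degrees} does not apply. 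The paper's resolution is the shifted grading $-\cN_{\d_y}+T\times(\text{target-space ghost number})$, which uses the locality of $s$ (the bound $T$ on the number of derivatives) in an essential way to push all of $\bar s$ to degree $\geq 0$ while keeping $-\sigma^F$ alone at degree $-1$. This device is missing from your argument, and without it the filtration step fails. Your closing ``clean one-step'' appeal to Proposition~\bref{prop:2.2} is also a misapplication: that proposition concerns algebraically trivial pairs of the \emph{original} differential $s$, which a generic starting point theory (e.g.\ a scalar field) simply does not have; the extra parent fields are not descendants of any such pairs.

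A second, related gap: you invoke the mechanism ``cohomology of $s_{-1}$ concentrated in degree zero, hence read off the reduced differential by setting the $v$'s to zero.'' That shortcut cannot apply here, because the correct solution of the constraints is $\Psi^A_{\lambda_1\dots\lambda_k|[]}=\d_{\lambda_1\dots\lambda_k}\Psi^A\neq 0$; if all $v$'s were set to zero, $\bar s$ would reduce to $s$ with all derivatives of the fields deleted, not to $s$. What actually has to be shown --- and what the paper proves by a double induction over the bidegree $(\cN_{\d_y},\cN_{\d_\theta})$ in the space of polynomials in the $v^a_{k,l}$ --- is that the $v$'s carrying $\theta$ indices vanish identically (using that there are no $v^a_{0,l}$ and no $v^a_{k,n}$), so that the identification of the $y$-derivative fields with honest $x$-derivatives is exact and not merely ``up to $w,v$-dependent terms.'' Your sketch asserts the conclusion of this step but does not supply the argument, and the exactness is precisely what guarantees that the reduced differential coincides with $s$ on the nose.
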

\proof{
The proof very closely follows the one for linear systems in
\cite{Barnich:2004cr}. Let $\cN_{\d_x}$ be the operator that counts
the number of $x^\mu$ derivatives on the fields in the original theory,
\begin{equation}
\cN_{\d_x}=\sum_{k=0} k \Psi^A_{\mu_1\dots\mu_k}
\dover{}{\Psi^A_{\mu_1\dots\mu_k}}\label{eq:2}.
\end{equation}
By assumption, the original differential $s$ is local in the sense
that $s$ involves a finite total number of derivatives, or in other
words, the decomposition of $s$ into homogeneous components of
$-\cN_{\d_x}$ is bounded from below, say by $-T$. It then follows that
the same is true for $\bar s$ in the parent theory in terms of
$y^\lambda$ derivatives on the fields counted by the operator
\begin{equation}
\cN_{\d_y}=\sum_{k=0}k \Psi^A_{(\mu)\lambda_1\dots\lambda_k[\nu]}
\dover{}{\Psi^A_{(\mu)\lambda_1\dots\lambda_k[\nu]}}\label{eq:3}.
\end{equation}
The grading is then choosen as $-\cN_{\d_y}$+ T$\times$ (target-space
ghost number). It follows that the lowest part of $s^P$ is $-\sigma^F$
in degree $-1$, while $d^F$ is in degree $0$ and $\bar s$ contains
terms that are of degree greater or equal to zero. For the lowest
part, all additional fields of the parent formulation form
algebraically trivial pairs.  Indeed, if $\rho=\dr{\theta^\mu}y^\mu$
and $N_{y,\theta}=\dr{y^\mu}y^\mu+\dr{\theta^\mu}\theta^\mu$, we have
$[\sigma,\rho]=N_{y,\theta}$ which implies the corresponding relation
for the prolongation of these operators acting on the space of fields
due to \eqref{eq:13}. The result then follows by using the standard
homotopy formula.

According to subsection~\bref{sec:degrees}, the additional fields of
the parent theory are thus generalized auxiliary fields. We still have
to show that the reduced differential coincides with the starting
point differential $s$. In order to do so, the fields
$\Psi^A_{(\lambda)[\nu]}(x)$ are split as follows: $\varphi^i(x) =
\Psi^A_{()[]} (x)\equiv \Psi^A(x)$, the fields $w^a(x)$ which form a
basis of the image of $\rho^F$ acting on the space of fields, and the
fields $v^a=\sigma^F w^a$ which by construction form a basis of the
image of $\sigma^F$. The fields $w^a(x),v^a(x)$ can be expressed in
terms of suitable Young tableaux involving the $\lambda$ and $\nu$
indices, but explicit expressions are not needed for the proof.

To compute the reduced differential, we have in a first step to solve
the equations
\begin{equation}
\label{GAF}
 s^P w^a\big|_{w=0}=0
\end{equation}  
with respect to $v^a$.  Consider the degree which counts
the number of skew-symmetric $\nu$ indices,
\begin{equation}
\cN_{\d_\theta}=\sum_{l=0}l \Psi^A_{(\mu)(\lambda)\nu_1\dots\nu_l}
\dover{}{\Psi^A_{(\mu)(\lambda)\nu_1\dots\nu_l}}\label{eq:1}.
\end{equation}
and split the $w^a,v^a$ according to their degree. In particular,
$\sigma^F$ and $d^F$ lower the degree by $1$, $\rho^F$ raises it by
$1$, while $\bar s$ is of degree $0$, as can be seen from
\eqref{eq:14}. It also follows that the $v^a$'s have non-negative degree,
while the $w^a$'s have strictly positive degree. When acting on
$w^a$'s of lowest degree $1$, $\Psi^A_{()|\nu_1}$,
$\Psi^A_{(\lambda_1|\nu_1)}$, $\dots$, \eqref{GAF} gives the sequence
of equations
\begin{equation}
\label{1steq}
\begin{gathered}
\big(  \d_{\lambda_1} \Psi^A-\Psi^A_{\lambda_1|[]}
+(-1)^{\p{A}}\bar s \Psi^A_{()|\lambda_1}\big)\big|_{w=0}=0\,,\\
 \big( \d_{(\lambda_1}
 \Psi^A_{\lambda_2)|[]}-\Psi^A_{\lambda_1\lambda_2|[]}+
(-1)^{\p{A}}\bar s
  \Psi^A_{(\lambda_1|\lambda_2)}\big)\big|_{w=0}=0\,,\\
  \ldots\,.
\end{gathered} 
\end{equation}
Since $\bar s$ is of degree $0$, in each of these equations, the last
term on the left hand side is necessarily proportional to $v^a$'s of
degree $1$ which implies that the above equations can successively be
solved for the $v^a$ of degree $0$ as
$\Psi^A_{\lambda_1\dots\lambda_k|[]}= \d_{\lambda_1\dots\lambda_k}
\Psi^A+O(1)$, where $O(k)$ denotes terms that are proportional to
$v^a$'s of degree $k$.

Let us split the variables $v^a,w^a$ with respect to both degrees
$\cN_{\d_y}$ and $\cN_{\d_\theta}$ such that
$\cN_{\d_y}v^a_{k,l}=kv^a_{k,l}$ and
$\cN_{\d_\theta}v^a_{k,l}=lv^a_{k,l}$ and analogously for $w$. Note that
there are neither $w^a_{k,0}$ nor $v^a_{0,l}$, $v^a_{k,n}$, where $n$ is
the space-time dimension. Working in the space of polynomials in
$v^a_{k,l}$ with $l>0$, let us consider the equations
\begin{equation}
\big ((d^F-\sigma^F+\bar s)w^a_{k,m}\big)|_{w=0}=0.
\end{equation} 
For $m=n$, at linear order in $v^a_{k,l}$, the last term necessarily
involves $v^a_{r,n}$ with $r\geq k$ and hence vanishes. The equation
then expresses $v^a_{k+1,n-1}$ through the derivatives of
$v^a_{k,n-1}$. Because there are no $v^a_{0,l}$ induction in $k$ shows
that $v^a_{k,n-1}$ all vanish at linear order. Repeating the argument
for $m=n-1$ and so on shows that all $v^a_{k,l}$ with $l>0$ vanish at
the linear order. Higher order corrections are necessarily
proportional to $v^a_{k,l}$ with $l>0$ this remains true to all
orders. This shows that $\Psi^A_{\lambda_1\dots\lambda_k|[]}=
\d_{\lambda_1\dots\lambda_k} \Psi^A$.

In the second and last step, we compute the reduced differential, 
\begin{equation}
 s_R
\Psi^A=(s^P\Psi^A)|_{w^a=0,v^a=v^a[\Psi^A]}=
(\bar s\Psi^A)|_{w^a=0,v^a=v^a[\Psi^A]}\,,
\end{equation}
which reduces to the original differential $s$ because $\bar s \Psi^A$
is by definition $s\Psi^A$ where the $x^\mu$ derivatives of $\Psi^A$
are replaced with the corresponding $y^\lambda$ derivatives, but the
latter are precisely the $v^a$'s of degree $0$.\qed}

{\bf Remark 1:} Instead of assuming polynomials in $v^a_{k,l}$, i.e.,~in
some of the fields that carry $\nu$ indices, one can repeat the proof
assuming polynomials in fields with nonvanishing ghost degree. This
assumption can be more natural from the point of view of BRST theory
and would allow for nonpolynomial expressions in form-fields.

{\bf Remark 2:} From the above proof, it appears that it is possible
to eliminate only a part of the contractible pairs $w,v$ for
$\sigma^F$. Namely, one can eliminate only $w_{k,l}$ and $v_{k,l}$
with $k+l \geq M$ for some $M$. Of course in this case, some $v_{k,l}$
with $l>0$ do not vanish anymore but are expressed through derivatives
of the remaining fields. For $M=0$, one recovers the original theory
while for $M$ sufficiently large, lower order equations are unaffected
and remain first order.  By such a consistent truncation, one can
arrive at a first order formulation with a finite number of
fields. This is just the parent theory counterpart of the usual
truncation of the infinite jet space to a finite one in the case of
equations involving a finite number of derivatives.

\subsubsection{Contractible pairs for $\tilde s$}
\label{sec:sigma+bars}

Consider now as a starting point the extended BRST differential
$\tilde s= \theta^\mu\d_\mu+s$ that acts in the space of local
functions with an explicit dependence in $\theta^\mu$.  When
constructing the associated parent differential, one first has to
replace the $x$ derivatives of the fields in $\overline{\tilde
  s}\Psi^A$ by $y$ derivatives. This gives a differential
$\overline{\tilde s}=\theta^\mu(\dl{x^\mu}+\dlf{y^\mu})+\bar s$ acting
on the space of $y$-derivatives $\Psi^A_{()(\lambda)[]}$ of $\Psi^A$.

Treating $x,\theta$ as independent variables and the $y$-derivatives
of $\Psi^A$ as dependent variables, the prolongation of
$\overline{\tilde s}\Psi^A$ to the entire jet space (i.e., to the $x$
and $\theta$ derivatives of $\Psi^A_{()(\lambda)[]}$) is obtained by
using the total $x$ and $\theta$-derivatives, $\partial_\mu,
\d^\theta_\nu=\dl{\theta^\nu}+\dlf{\theta^\nu}$ (see
e.g.~\cite{Olver:1993} for details on
prolongations):~\footnote{Strictly speaking the prolongation should be
  done using $\dl{\theta^\mu}-\dlf{\theta^\mu}$ as a total
  derivative. We use here the prolongation modified by a change of
  signs for the $\theta$-derivatives in order to fit the convention for
  the parent differential used in the rest of the paper.
  Alternatively, consistent signs can be achieved by starting with
  $\tilde s =-d_H+s$ or by exchanging the sign of the $d^F-\sigma^F$ term in
  the parent differential.\label{foot:sign}}
\begin{equation}
{{(\tilde s)}^P}=\theta^\mu (\dover{}{x^\mu}+
\dlf{y^\mu})+d^F-\sigma^F+\bar s\,.\label{eq:xx}
\end{equation}

It follows that the standard parent differential for $s$ is related to ${\bar{\tilde
    s}}$ through
\begin{equation}
\label{eq:27}
s^P={{(\tilde s)}^P}\big|_{\theta=0}\,.
\end{equation}
In other words, the $d^F-\sigma^F$ term of the parent differential is
automatically generated from the parent prolongation of the term
$\theta^\mu\d_\mu$ in $\tilde s$. This property can be used as
follows:

\begin{prop}
  Trivial pairs for $\tilde s$ give rise to a family of generalized
  auxiliary fields comprising all descendants obtained through total
  $\theta$ derivatives at $\theta=0$.
\end{prop}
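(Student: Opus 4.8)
The plan is to run the argument of the preceding target-space reduction propositions (Proposition~\bref{prop:2.2} and its analogue for non-algebraic pairs) with the horizontally extended differential $\tilde s=\theta^\mu\d_\mu+s$ in the role of $s$, and then to transport the conclusion back to $s^P$ by means of the identity $s^P={(\tilde s)}^P\big|_{\theta=0}$ of \eqref{eq:27} together with the explicit form \eqref{eq:xx} of ${(\tilde s)}^P$.

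First I would rephrase the hypothesis. By construction of the parent differential one replaces the $x^\mu$-derivatives in $\overline{\tilde s}\Psi^A$ by $y^\lambda$-derivatives; this sends a trivial pair $(w,\tilde s w)$ for $\tilde s$ to a trivial pair $(\bar w,\overline{\tilde s}\,\bar w)$ for $\overline{\tilde s}=\theta^\mu(\dl{x^\mu}+\dlf{y^\mu})+\bar s$ on the space of $y$-derivatives of $\Psi^A$. Exactly as for the non-algebraic trivial pairs for $s$ discussed in Section~\bref{sec:targ-space-reduct}, the point is that this pair is now \emph{algebraic with respect to both the $x^\mu$- and the $\theta^\mu$-prolongation}: it contains no $x^\mu$-derivatives, these having been traded for $y^\lambda$-derivatives, and no $\theta^\mu$-derivatives, since in the $\tilde s$-theory the $\theta^\mu$ are merely the finitely many base variables along which one later prolongs. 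Prolonging to the full parent jet space by the total derivatives $\d_\mu$ and $\d^\theta_\nu=\dl{\theta^\nu}+\dlf{\theta^\nu}$, which commute with $\overline{\tilde s}$ as follows from \eqref{eq:14}, the fields obtained from $\bar w$ and $\overline{\tilde s}\,\bar w$ by repeated action of $\d^\theta_\nu$ --- understood, as always for a local field theory, together with all their $x^\mu$-derivatives --- form trivial pairs for ${(\tilde s)}^P$.

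Second, I would invoke the homological mechanism of Section~\bref{sec:degrees}. With the grading given by minus the target-space ghost number, $\bar s$ is the lowest-degree component of ${(\tilde s)}^P$ while $d^F-\sigma^F$ and $\theta^\mu(\dl{x^\mu}+\dlf{y^\mu})$ have non-negative degree; reasoning as in the proof of the equivalence proposition of Section~\bref{sec:equiv-parent-orig}, one solves $\bigl({(\tilde s)}^P\bar w\bigr)\big|_{\bar w=0}=0$ for the partners $\overline{\tilde s}\,\bar w$ order by order in this degree (and, if needed, in a second refined degree as in that proof), so the above fields constitute a family of generalized auxiliary fields of ${(\tilde s)}^P$. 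Finally, restricting to $\theta=0$ removes the explicit-$\theta$ term $\theta^\mu(\dl{x^\mu}+\dlf{y^\mu})$ but leaves $\bar s$, hence the lowest-degree piece and all degrees used in the reduction, untouched; so by \eqref{eq:27} the restricted fields $\d^\theta_{\nu_1}\cdots\d^\theta_{\nu_l}\bar w\,\big|_{\theta=0}$ and $\d^\theta_{\nu_1}\cdots\d^\theta_{\nu_l}\,\overline{\tilde s}\,\bar w\,\big|_{\theta=0}$ are generalized auxiliary fields of $s^P$, which is precisely the asserted family of descendants obtained through total $\theta$-derivatives at $\theta=0$.

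The step I expect to be the main obstacle is exactly this compatibility of the restriction $\theta=0$ with the elimination: one must check that no partner variable which has to be solved for is forced by the restriction to fall into degree $0$, in which case it would survive as a weak-cohomology-type constraint rather than being eliminated --- the phenomenon of the variables $v^{a_0}$ in Sections~\bref{sec:degrees} and on going on-shell. Verifying this amounts to carrying the explicit $\theta^\mu$ through the degree bookkeeping of the equivalence proof and checking that the counting operators involved --- the target-space ghost number and $\cN_{\d_\theta}$ --- are insensitive to the value of the explicit $\theta^\mu$, so that every solvability statement established before setting $\theta=0$ persists afterwards; the rest is the routine homotopy argument already used for the earlier propositions.
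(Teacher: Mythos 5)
Your overall strategy (pass to $y$-derivatives, take total $\theta$-descendants, restrict to $\theta=0$ via \eqref{eq:27}) is the right one, but the step on which everything hinges is wrong: the total $\theta$-derivatives do \emph{not} commute with $(\tilde s)^P$. A direct computation from \eqref{eq:xx} gives $\commut{\d^\theta_\mu}{(\tilde s)^P}=\dl{x^\mu}+\dlf{x^\mu}$, i.e.\ the total $x$-derivative; the relation $\commut{\bar s}{\dlf{\theta^\nu}}=0$ from \eqref{eq:14} concerns only $\bar s$ and the field part of the $\theta$-derivative, not $\overline{\tilde s}$ (which contains $\theta^\mu(\dl{x^\mu}+\dlf{y^\mu})$) and not $\d^\theta_\nu$ (which contains $\dl{\theta^\nu}$). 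Consequently the descendants $w_{\nu_1\ldots\nu_k}$, $v_{\nu_1\ldots\nu_k}$ are \emph{not} trivial pairs for $(\tilde s)^P$: one finds instead
\begin{equation*}
(d^F-\sigma^F+\bar s)\,w_\nu=v_\nu+(\dl{x^\nu}+\dlf{x^\nu})\,w\,,
\end{equation*}
and analogous extra terms for the higher descendants. The actual content of the proof is to show that these extra terms, being total $x$-derivatives of lower descendants, do not obstruct the algebraic solvability of $\bigl(s^P w_{\nu_1\ldots\nu_k}\bigr)\big|_{w=0}=0$ for the $v_{\nu_1\ldots\nu_k}$; this is organized by the degree $\cN_{\d_\theta}$ of \eqref{eq:1}, which makes the system triangular in the number of $\nu$-indices. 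Your proposal skips precisely this verification by assuming the commutator vanishes.

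Your fallback homological argument also uses the wrong grading. With minus the target-space ghost number, the lowest-degree piece of $(\tilde s)^P$ is $\bar s$ alone, while the hypothesis only provides trivial pairs for $\overline{\tilde s}=\theta^\mu(\dl{x^\mu}+\dlf{y^\mu})+\bar s$, whose extra term sits in degree zero; so the mechanism of Section~\bref{sec:degrees} (trivial pairs for the lowest-degree piece) does not apply as you invoke it. Relatedly, your worry about partners "falling into degree $0$" after setting $\theta=0$ is not where the difficulty lies: restricting to $\theta=0$ is harmless, and the genuine issue is the inhomogeneous total-derivative terms generated by $\commut{\d^\theta_\mu}{(\tilde s)^P}$, which must be shown to vanish on the prolonged surface $w=0$ in the course of solving for the $v$'s.
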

\begin{proof}
  By assumption, in the original theory there are new independent
  variables $w=w(x,\theta,\Psi)$, $v=v(x,\theta,\Psi)$ such that
  $\tilde s w=v$. In the expression for $w,v$ we replace $x$-derivatives
  by $y$-derivatives so that $(\tilde s)^Pw=v$.
  It then follows from~\eqref{eq:27} that $s^P w|_{\theta=0}=v|_{\theta=0}$.
  Let $w_{\nu_1\ldots\nu_k}=(\d^\theta_{\nu_1}\ldots
  \d^\theta_{\nu_k}w)|_{\theta=0}$ and
  $v_{\nu_1\ldots\nu_k}=(-1)^k(\d^\theta_{\nu_1}\ldots
  \d^\theta_{\nu_k}v)|_{\theta=0}$.
Using $\commut{\d^\theta_\mu}{\tilde s^P}=\dl{x^\mu}+\dlf{x^\mu}$ one finds
\begin{equation}
(d^F-\sigma^F+\bar s)w_\nu=v_\nu+(\dl{x^\nu}+\dlf{x^\nu})w\,,\label{eq:part}
\end{equation}
and similar formulas for higher $w_{\nu_1\ldots\nu_k}$ and
$v_{\nu_1\ldots\nu_k}$.  Using as a degree $\cN_{\d_\theta}$, one
observes that the equations $(d^F-\sigma^F+\bar
s)w_{\nu_1\ldots\nu_k}=0$ can be algebraically solved for
$v_{\nu_1\ldots\nu_k}$, so that $w_{\nu_1\ldots\nu_k}$ and
$v_{\nu_1\ldots\nu_k}$ are indeed generalized auxiliary fields for the
parent theory. Note in particular that, for $x$-independent $w$'s,
equation \eqref{eq:part} implies that the $w_{\nu_1\ldots\nu_k}$ and
$v_{\nu_1\ldots\nu_k}$ are simply contractible pairs for
$-\sigma^F+\bar s$.
\end{proof}

\subsection{Diffeomorphism invariant theories}
\label{sec:diffeo}

Suppose that the starting point theory is diffeomorphism invariant and
that diffeomorphisms are among the generating set of gauge
transformations. By this we mean that there is no explicit $x^\mu$
dependence in the starting point BRST differential, and thus also none
in the parent differential. Furthermore, the starting point theory has
diffeomorphism ghost fields $\xi^\mu$ (replacing the vector fields
parametrizing infinitesimal diffeomeorphisms) among the fields
$\Psi^A$ and the part of $s\Psi^A$ that involves the undifferentiated
$\xi^\mu$ is given by $s^\prime\Psi^A=\xi^\mu\d_\mu\Psi^A$ for all
$\Psi^A$. When suitably prolonged to all derivatives of the fields,
this means that $s=s^\prime +s^{\prime\prime}$ where
$s^{\prime\prime}$ does not depend on the undifferentiated $\xi^\mu$.

At the level of the parent theory, this implies in particular that
$\bar s$ contains $\xi^\lambda\dover{^F}{y^\lambda}$ as the only piece
which depends on the undifferentiated diffeomorphism ghosts
$\xi^\lambda$.  From the prolongation formulas \eqref{eq:14}, it also
follows that, when acting on
$\Psi^A_{(\mu)\lambda_1\dots\lambda_l|\nu_1\dots\nu_k}$, the piece
originating from $s^\prime$ and containing no derivatives of the
diffeomorphism ghosts but one of type $\theta^\nu$ is given by
$(-1)^A k\xi^\lambda_{()()[\nu_1|}\Psi^A_{(\mu)\lambda\lambda_1\dots
  \lambda_l|\nu_2\dots\nu_k]}$.  Note also that this is the only
term in the parent differential that depends on
$\xi^\lambda_{()()\nu}$.

The piece $-\sigma^F$ in the parent differential $s^P$ can then be
absorbed through the field redefinition
\begin{equation}
\xi^\lambda_{()()\nu}\to
\xi^\lambda_{()()\nu}+\delta^\lambda_\nu\label{eq:6}\,.
\end{equation}
Since all other terms of the parent differential are
unaffected, the parent differential in terms of the new fields takes
the form
\begin{equation}
 s^P=d^F+\bar {s}\,,
\end{equation}
where $\bar {s} =\bar { s^\prime} + \bar { s^{\prime\prime}}$ is precisely the
prolongation of the original BRST differential. If one regroups the
$\lambda$ indices corresponding to the $y^\lambda$ derivatives
together with the $A$ indices and considers the $\Psi^A_{(\lambda)}$
as coordinates of a $Q$-manifold, we have:
\begin{prop}\label{prop:2.5}
  The parent formulation of a diffeomorphism invariant theory is of
  AKSZ-type with an infinite-dimensional target space that contains
  all derivatives of the original fields and a $Q$-structure that
  coincides with the starting point BRST differential
\end{prop}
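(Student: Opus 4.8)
The plan is to assemble the structural facts established in the preceding subsection and to recast them in the AKSZ language of Section~\bref{sec:aksz-type-sigma}, which already isolates the relevant criterion: a parent differential from which the $\sigma^F$ term is absent and which carries no explicit $x^\mu$ dependence is of AKSZ type once the multi-index $(\lambda)$ labelling the $y^\lambda$ derivatives is absorbed into a collective target index $a=A(\lambda)$. So I would first check that, for a diffeomorphism invariant starting point theory, $s^P$ is exactly of this form. Diffeomorphism invariance is by definition the absence of explicit $x^\mu$ dependence in $s$, hence in $\bar s$ and in $s^P$. For the $\sigma^F$ term, I would recall the argument around~\eqref{eq:6}: $\xi^\lambda\dover{^F}{y^\lambda}$ is the only piece of $\bar s$ carrying an undifferentiated diffeomorphism ghost, and the only term of the entire parent differential that depends on the component $\xi^\lambda_{()()\nu}$ is the $s^\prime$-descendant $(-1)^A k\,\xi^\lambda_{()()[\nu_1|}\Psi^A_{(\mu)\lambda\lambda_1\dots\lambda_l|\nu_2\dots\nu_k]}$; the constant shift $\xi^\lambda_{()()\nu}\to\xi^\lambda_{()()\nu}+\delta^\lambda_\nu$ then absorbs the explicit $-\sigma^F$ of $s^P=d^F-\sigma^F+\bar s$ into $\bar{s^\prime}$ via the action formulas~\eqref{eq:7}, leaving every other term untouched since nothing else involves $\xi^\lambda_{()()\nu}$. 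In the new field variables one therefore has $s^P=d^F+\bar s$ with $\bar s=\bar{s^\prime}+\bar{s^{\prime\prime}}$ the canonical prolongation of the starting point BRST differential.

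With $s^P$ in this form I would read off the AKSZ base and target. Set $\Phi^a\equiv\Psi^A_{(\lambda)}$ and regard the $\Phi^a$, with $(\lambda)$ running over all symmetric multi-indices, as coordinates of a graded manifold $\cN$; since $(\lambda)$ is unbounded, $\cN$ is infinite-dimensional and its coordinates are precisely all the $y^\lambda$ derivatives, that is, all derivatives of the original fields. The remaining parent fields $\Psi^A_{(\mu)(\lambda)[\nu]}$ are then the components, over the base with coordinates $x^\mu,\theta^\mu$, of the AKSZ superfield $\Phi^a(x,\theta)$ — the skew index $[\nu]$ supplying the form degree and the symmetric jet index $(\mu)$ the $x^\mu$ dependence — and since $d^F\Psi^A=d\Psi^A$ with $d=\theta^\mu\dl{x^\mu}$, the term $d^F$ acts on $\Phi^a(x,\theta)$ exactly as the AKSZ differential $d$.

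For the $Q$-structure I would argue as follows. By construction $\bar s\Psi^A=\bar S^A[\Psi]$ depends only on $\Psi^B$'s carrying $(\lambda)$ indices and, by diffeomorphism invariance, not on $x^\mu$; the prolongation relations~\eqref{eq:14} accordingly reduce to $[\bar s,\dover{^F}{y^\lambda}]=0$, $[\bar s,\dover{^F}{\theta^\nu}]=0$ and $[\bar s,\d_\mu]=0$. The first says $\bar s$ commutes with the lowering of $(\lambda)$ indices, so $\bar s\Psi^A_{(\lambda)}$ is again a function of the $\Phi^b$'s alone and $\bar s$ restricts to a vector field $Q$ on $\cN$ whose components are the canonical prolongation of the target vector field $\bar S^A$; the second and third say $\bar s$ commutes with the $\theta^\nu$ and with the spacetime jet structure, i.e.\ acts pointwise on the superfield, so $\bar s\Phi^a(x,\theta)=Q^a(\Phi(x,\theta))$. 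Nilpotency $Q^2=0$ follows from $\bar s^2=0$, which was already used in establishing the nilpotency of $s^P$, so $Q$ is a homological vector field on $\cN$; and since $\bar S^A$ is $S^A$ with the $x^\mu$ derivatives of the fields replaced by the corresponding $y^\lambda$ derivatives, identifying $y^\lambda$ derivatives with $x^\mu$ derivatives turns $Q$ into the original BRST differential $s$ prolonged to the fields and all their derivatives. Collecting everything gives $s^P\Phi^a=d\Phi^a+Q^a(\Phi)$ with $\cN$ infinite-dimensional, containing all derivatives of the original fields, and $Q$ coinciding with the starting point BRST differential, which is the assertion.

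The step I expect to be the main obstacle is the precise formulation of the AKSZ identification itself — making rigorous in what sense the infinite collection $\Phi^a=\Psi^A_{(\lambda)}$ together with $\bar s$ constitutes a (formal) $Q$-manifold, and checking that $d^F$ restricted to the associated superfields is literally the AKSZ $d$ rather than merely analogous to it, which requires correctly distributing the bookkeeping roles of the three index families $(\mu),(\lambda),[\nu]$ between base and target. Everything else — the absorption of $-\sigma^F$ by~\eqref{eq:6}, the nilpotency of $Q$, and the recognition of $Q$ as the prolongation of $s$ — is routine manipulation with the commutation relations~\eqref{eq:13}--\eqref{eq:14} and the action formulas~\eqref{eq:7} already at our disposal.
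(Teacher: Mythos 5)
Your proposal is correct and follows essentially the same route as the paper: the content of Proposition~\bref{prop:2.5} is established in the paper by the discussion preceding it in Section~\bref{sec:diffeo}, namely the observation that $\xi^\lambda\dover{^F}{y^\lambda}$ is the only piece of $\bar s$ containing the undifferentiated diffeomorphism ghost, that only the descendant term $(-1)^A k\,\xi^\lambda_{()()[\nu_1|}\Psi^A_{(\mu)\lambda\lambda_1\dots\lambda_l|\nu_2\dots\nu_k]}$ involves $\xi^\lambda_{()()\nu}$, and that the shift \eqref{eq:6} therefore absorbs $-\sigma^F$ leaving $s^P=d^F+\bar s$ with target coordinates $\Psi^A_{(\lambda)}$. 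Your additional elaboration of why $\bar s$ restricts to a homological vector field on the target and why $d^F$ acts as the AKSZ $d$ is a faithful unpacking of the criterion already stated in Section~\bref{sec:aksz-type-sigma}, not a different argument.
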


\subsection{Parametrized theories}
\label{sec:param}

As we have seen, the parent formulation is simpler if the starting
point theory is diffeomorphism invariant.  Of course, any theory can
be made diffeomorphism invariant through parametrization. This means
that the independent variables, the coordinates of space-time, become
fields on the same level as the other fields, while new arbitrary
parameters are introduced instead of the original independent
variables.  In this section, we analyze the parent formulation for
parametrized theories.

One way to construct the parametrized parent formulation is to first
make the theory diffeomorphism invariant and then to construct its
parent formulation following the general procedure explained in the
previous sections. Another possibility is to parametrize directly in
the parent formulation by adding extra fields and gauge symmetries.
It turns out to be more economical and instructive to directly build
the parametrized parent formulation from scratch. 

Suppose that the original gauge theory involves a space-time with
coordinates $y^a$, fields $\Psi^A$, and a BRST differential $s$
defined by $s\Psi^A=s^A[\Psi,y]$ and $\commut{\d_a}{s}=0$, where
$\d_a$ denotes total derivative with respect to $y^a$. As before, we
introduce Grassmann odd variables $\xi^a,\,\,\gh{\xi^a}=1$ standing
for $dy^a$ so that horizontal forms become functions of
$\Psi^A_{(a)},y^a,\xi^a$. The space of horizontal forms is equipped
with the total BRST differential $\tilde s=s+\xi^a
\d_a$. Note that we have changed notations with respect to the
considerations in~\bref{sec:non-lagrangian-aksz} because we reserve
$x^\mu,\theta^\mu$ to denote the space-time coordinates and their
differentials after parametrization.

Let us then consider the AKSZ-type sigma model with target space the
extended jet space with coordinates $\Psi^A_{(a)},y^a,\xi^a$ equipped
with the differential $\tilde s$ and source space the extended
space-time manifold with coordinates $x^\mu,\theta^\mu$. We call the
resulting theory the parametrized parent formulation.
\begin{prop}
  The parent formulation as defined in
  Section~\bref{sec:diff-parent-theory} can be obtained through the
  elimination of the following generalized auxiliary fields from the
  parametrized parent formulation:
\begin{equation}
\label{pp-gaf}
 y^a-Y^a(x),\,\, y^a_{\nu_1\ldots \nu_k}\,\,, k>0 \qquad 
\xi^a,\,\, \xi^a_{\nu}+\ddl{ Y^a}{ x^\nu}, \,\, \xi^{a}_{\nu_1\ldots \nu_k}\,\,, k>1\,.
\end{equation} 
Here $Y^a(x)$ define an invertible change of space-time
coordinates. To obtain both formulations in the same coordinates, one
takes $Y^a(x)=\delta^a_\mu x^\mu$.
\end{prop}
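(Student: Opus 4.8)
The plan is to exhibit the parametrized parent formulation explicitly, identify the listed fields as generalized auxiliary fields using the degree-counting machinery of Section~\bref{sec:degrees}, and then verify that the reduced differential is exactly the standard parent differential $s^P = d^F - \sigma^F + \bar s$ of Section~\bref{sec:diff-parent-theory}. First I would write out $s^{AKSZ}$ for the target-space $Q$-structure $\tilde s = s + \xi^a \d_a$ on coordinates $\Psi^A_{(a)}, y^a, \xi^a$ with source $(x^\mu,\theta^\mu)$: this gives $s^{PP}\Phi = d^F\Phi + \overline{\tilde s}\,\Phi$ where $\Phi$ runs over the prolongations (in $y^\lambda$-derivatives, i.e.\ the $(\lambda)$ indices) of all three sets of target fields, $d^F = \theta^\mu(\d/\d x^\mu + \d^F/\d y^\mu)$ acts as usual, and $\overline{\tilde s}$ is $\tilde s$ with $y^a$-derivatives replaced by $y^\lambda$-derivatives. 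In particular $s^{PP} y^a = \theta^\mu\d_\mu^F y^a + \xi^a$ and $s^{PP}\xi^a = -\theta^\mu\d_\mu^F\xi^a$, while on $\Psi^A$ one has $s^{PP}\Psi^A = d^F\Psi^A + \xi^\lambda(\d^F/\d y^\lambda)\Psi^A + \overline{s^{\prime\prime}}\Psi^A$, i.e.\ the $\xi^a\d_a$ term of $\tilde s$ prolongs to exactly the $\xi^\lambda \d^F/\d y^\lambda$ piece acting on the $y$-derivative tower.

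Next I would organize the elimination as a sequence of reductions, as allowed by Section~\bref{sec:degrees}. The cleanest route: use the component expansion in $\theta$-number ($\cN_{\d_\theta}$) as an auxiliary degree and solve the generalized-auxiliary-field equations $s^{PP} w|_{w=0}=0$ for the candidate fields order by order. For the $y$-sector: the equation for $y^a_{\nu_1\ldots\nu_k}$ with $k>0$ together with $s^{PP} y^a = \theta^\mu\d^F_\mu y^a + \xi^a + (\text{$\theta$-derivative terms})$ forces, inductively in $\cN_{\d_\theta}$, that all $y^a_{\nu_1\ldots\nu_k}$ with $k>0$ vanish and $y^a = Y^a(x)$ for an arbitrary invertible function determined by a first-order ODE-type relation $\theta^\mu\d Y^a/\d x^\mu = \theta^\mu \d^F_\mu y^a|_{\ldots}$; since one is free to pick the boundary data, choosing $Y^a(x) = \delta^a_\mu x^\mu$ is legitimate. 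Once $y^a$ is frozen, $s^{PP}\xi^a = -\theta^\mu \d^F_\mu \xi^a$ together with consistency ($\tilde s^2=0$ translating to $s^{PP} y^a$ being $s^{PP}$-closed) pins $\xi^a_\nu$ to $-\d Y^a/\d x^\nu$ and the higher $\xi^a_{\nu_1\ldots\nu_k}$ with $k>1$ to zero, again by an induction on $\theta$-number — this is the standard "absorb $\sigma^F$ by a ghost shift" mechanism of \eqref{eq:6}, here appearing because the diffeomorphism ghost of the parametrized theory is precisely $\xi^a$. For the argument that these form algebraically trivial pairs for the lowest piece of $s^{PP}$ in an appropriate degree, I would mimic the homotopy argument in the proof of Proposition on equivalence of parent and original theory (Section~\bref{sec:equiv-parent-orig}): the pairs $(y^a_{(\lambda)[\nu]}, \xi^a_{(\lambda)[\nu]})$ for $[\nu]\neq[\,]$, and $(y^a_{(\lambda)[\,]}, \text{its image})$, are contractible for $d^F - \sigma^F$ plus the relevant lowest term, so Section~\bref{sec:degrees}'s proposition applies.

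Then I would compute the reduced differential. After substituting $y^a = \delta^a_\mu x^\mu$, $\xi^a_\nu = -\delta^a_\nu$, and all other listed fields to their solved values (zero), the remaining fields are the $\Psi^A_{(\lambda)[\nu]}$, and $s_R\Psi^A = (s^{PP}\Psi^A)|_{\text{GAF solved}}$. The $d^F$ term survives untouched. The $\xi^\lambda \d^F/\d y^\lambda$ term, with $\xi^\lambda$ now meaning $\xi^\lambda_{()()\nu}\theta^\nu + \ldots$ and the undifferentiated part $\xi^\lambda_{()()[\,]}$ still present but the $\theta$-linear part $\xi^\lambda_{()()\nu} = -\delta^\lambda_\nu$, produces exactly $-\theta^\nu (\d^F/\d y^\nu)\Psi^A = -\sigma^F\Psi^A$ plus the contribution of the surviving undifferentiated ghost, which recombines into $\bar s^\prime\Psi^A$; together with $\overline{s^{\prime\prime}}\Psi^A$ this is $\bar s\Psi^A$. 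Hence $s_R = d^F - \sigma^F + \bar s = s^P$, which is the claim. The main obstacle I anticipate is bookkeeping the $\theta$-prolongation of $\xi^a\d_a$ carefully — making sure the sign conventions of \eqref{eq:7} and footnote~\bref{foot:sign} are respected so that the shift \eqref{eq:6} reproduces $-\sigma^F$ with the correct sign rather than $+\sigma^F$ — and checking that the ODE fixing $Y^a(x)$ genuinely has the constant-coefficient solution $\delta^a_\mu x^\mu$ as a consistent branch, i.e.\ that no obstruction from the $\theta$-derivative terms in $s^{PP} y^a$ spoils solvability. I expect both to be routine once the degree $\cN_{\d_\theta}$ is used to linearize the recursion, exactly as in the equivalence proof of Section~\bref{sec:equiv-parent-orig}.
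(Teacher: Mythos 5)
Your proposal is correct and follows essentially the same route as the paper's (very terse) proof: the generalized auxiliary fields are the $\theta$-descendants of $y^a-Y^a(x)$ paired with their images under the parametrized parent differential, solvability is organized by the $\theta$-degree, and the reduced differential is obtained by substituting the solved values, with the prolongation of $\xi^a\d_a$ collapsing to $-\sigma^F$ once $\xi^a_\nu=-\delta^a_\nu$.

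One spot in your final computation is off, although it does not change the answer. In the general parametrized setting the $\xi^a$ are \emph{new} ghosts introduced by parametrization, and \emph{every} component of the superfield $\xi^a(x,\theta)$ appears in the list \eqref{pp-gaf}: the $\theta$-independent component $\xi^a$ is set to zero, $\xi^a_\nu$ to $-\ddl{Y^a}{x^\nu}$, and all higher components to zero. There is therefore no ``surviving undifferentiated ghost'' and no residual $\bar s^\prime$ contribution from $\overline{\xi^a\d_a}$: on the solved values the superfield is $\xi^a=-\theta^a$, so this term is exactly $-\sigma^F$ and nothing more, while the $\bar s$ in $s^P=d^F-\sigma^F+\bar s$ is entirely your $\overline{s^{\prime\prime}}$, i.e.\ the prolongation of the original BRST differential. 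The decomposition $s=s^\prime+s^{\prime\prime}$ with a surviving $s^\prime$ piece belongs to the genuinely diffeomorphism-invariant case of Section~\bref{sec:diffeo}, where the diffeomorphism ghosts are among the $\Psi^A$ and are \emph{not} eliminated; importing it here introduces a spurious term that happens to vanish. Relatedly, the $y$-sector fields are not ``forced'' to equal $Y^a(x)$ by the equations: they are the $w$-variables, set to zero by the defining constraint of the reduction, and the equations $s^{PP}w\big|_{w=0}=0$ are then solved for the shifted $\xi$-components, which are the $v$-variables. With these two points cleaned up, your argument coincides with the one in the paper.
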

\begin{proof}
  It is straightforward to check that fields~\eqref{pp-gaf} can be
  eliminated by imposing the following constraints
\begin{equation}
 \dlf{\theta^{\nu_1}}\ldots \dlf{\theta^{\nu_k}}(y^a-Y^a(x))=0\,, 
\qquad (d^F+\bar{\tilde s})\dlf{\theta^{\nu_1}}\ldots \dlf{\theta^{\nu_k}}(y^a-Y^a(x))=0\,,
\end{equation} 
so that they are indeed generalized auxiliary fields. After the
reduction, the terms in the reduced differential originating from
$\xi^a\dl{y^a}$ in $\tilde s$ give rise to precisely $-\sigma^F$ if
one in addition takes $Y^a=\delta^a_\mu x^\mu$. Finally, the terms
$d^F$ and $\bar s$ remain intact.
\end{proof}
It is important to stress that in contrast to other reductions
considered in this paper, the elimination of variables~\eqref{pp-gaf}
is not a strictly local operation. In addition to the explicit
space-time dependence of the gauge condition $y^a=Y^a(x)$, the
elimination breaks locality in the sense described in
Section~\bref{sec:reduction}. Namely, in the space of local functions,
it is impossible to decouple variables~\eqref{pp-gaf} and the remaining
variables $\Psi^A$ and their $\theta,y$-descendants. Indeed, looking
for a completion $\tilde\Psi^A$ such that $(d^F+\bar{\tilde
  s})\tilde\Psi^A$ is a function of $\tilde\Psi^A$ and their
descendants, one finds that $\tilde\Psi^A$ necessarily involves
derivatives of arbitrarily high order (see~\cite{Brandt:2001tg} for an
algebraically similar example in the context of local BRST cohomology)
and hence such $\tilde\Psi^A$ do not exist in the space of local
functions.

In spite of this nonlocality, the local BRST cohomology of the
parametrized parent formulation is isomorphic to that of the starting
point theory. Indeed, the local BRST cohomology of the AKSZ-type sigma
model with the target space differential being $\tilde s$ is
isomorphic to $\tilde s$-cohomology of the target space local
functions~~\cite{Barnich:2009jy} and hence coincides with that of the
starting point theory.

The parametrized parent formulation can be also used as a shortcut to
parent formulation for diffeomorphism invariant theories.  If the
starting point theory is diffeomorphism invariant $\tilde s$ can be
brought to the form $\xi^a\dl{y^a}+s$ by redefining the diffeomorphism
ghosts by $\xi^a$ (see e.g.~\cite{Barnich:1994db,Brandt:1997iu}). In
this case $\xi^a$ and $y^a$ are algebraically trivial pairs and can be
eliminated so that the parametrized parent formulation reduces to that
of Proposition~\eqref{prop:2.5}.

To complete the discussion of parametrization and to make contact with
the literature, let us show how the parametrized parent formulation
can be seen as a systematic way to obtain a manifestly diffeomorphism
invariant form for theories invariant under some space-time
symmetries.  Without trying to be exhaustive, let us for simplicity
assume that the starting point theory is translation invariant so that
the BRST differential is $y^a$ independent for a suitable choice of
space-time coordinates. One can then consistently drop the
$y^a$-fields in the parametrized parent formulation as these variable
are completely decoupled from the rest. In this case, the reduction to
the usual parent formulation can be seen as imposing the gauge
condition
$\xi^a=0,\xi^a_{\nu}=-\ddl{Y^a}{x^\nu},\xi^a_{\nu_1\ldots\nu_k}=0$ so
that the theory itself and its reduction to the usual description can
be defined without any reference to fields originating from $y$.  The
r\^ole of the $\xi^a$ variables can also be given another
interpretation: for a translation invariant theory the starting point
$\tilde s$ can be considered as acting on the truncated jet space that
does not involve the $y^a$-variables.  Variables $\xi^a$ can then
be interpreted as constant ghosts that take the translation symmetry into
account in the BRST differential.

This has a straightforward generalization to the case where
translations are part of a larger global space-time symmetry algebra
such as the Poincar\'e, AdS or conformal algebras for instance and
results in the formulation where this symmetry algebra is realized
in a manifest way. Formulations of this type are extensively used in
the context of the unfolded
approach~(see.~\cite{Vasiliev:2003ev,Vasiliev:2005zu} and references
therein) and were also used
in~\cite{Barnich:2006pc,Grigoriev:2006tt,Bekaert:2009fg} in the
context of parent-like formulations.

\subsection{Local BRST cohomology}
\label{sec:local-brst-cohom}

It is instructive to see how the BRST and the local BRST cohomology,
which are by construction isomorphic to the ones of the original
theory, appear in the parent formulation.  Let us begin with the
cohomology in the space of local functions. In the case of the parent
formulation it is natural to consider functions that are local in the
sense that they depend on both $x$ and $y$-derivatives of the fields
only up to some finite order. The isomorphism of BRST cohomologies in
the space of local function can be seen as follows: take as a degree
$\cN_{\d_\theta}+\cN_{\d_y}$. The lowest order terms in $s^P$ is
$s^P_{-1}=d^F$. Its cohomology is given by local functions that do not
depend on both $x$ and $\theta$ derivatives of fields. The reduced
differential is simply $\bar s$ restricted to act on the space of
local functions in $x$ and $\Psi^A_{()(\lambda)[]}$. Exchanging the
role of $x$ and $y$ derivatives this complex can be identified with
the starting point BRST complex. 

As briefly explained at the end of
section~\bref{sec:non-lagrangian-aksz}, in order to compute the local
BRST cohomology, one has to compute the cohomology of $\tilde
{s^P}=s^P+d_H$ in the space of horizontal forms. When identifying
$\theta^\nu\equiv dx^\nu$, this simply amounts to including an
explicit $\theta^\nu$ dependence in the space of local functions. To
explicitly verify the isomorphism, let us again take as a degree
$\cN_{\d_\theta}+\cN_{\d_y}$ so that the lowest term in $\tilde{s^P}$
is again $d^F$. Identifying its cohomology with functions in
$\Psi^A_{()(\lambda)[]},x^\mu,\theta^\nu$ and repeating the steps of
the proof of equivalence in section~\bref{sec:equiv-parent-orig} with
the r\^ole of $x^\mu$ and $y^\lambda$ derivatives exchanged, one finds
that the term $\theta^\mu\dlf{x^\mu}$ entering $d_H$ acts in the
cohomology as $\theta^\mu\dlf{y^\mu}$, $\sigma^F$ acts trivially,
while the action of $\theta^\mu\dl{x^\mu}$ is unchanged. Finally the
reduced differential is just $\tilde s$ with the role of $x$ and
$y$-derivatives exchanged.  In order to make sure that this indeed
gives an isomorphism of cohomologies, let us note that a complete
coordinate system can be chosen to contain besides the trivial pairs
for $d^F$ and $\theta^\mu,x^\mu$, the coordinates
\begin{equation}
\tilde\Psi^A_{(\lambda)}= \sum_{l=0}\frac{1}{l!}
\Psi^A_{(\lambda)\nu_1\dots\nu_l}\theta^{\nu_l}\dots\theta^{\nu_1}
\label{eq:12}\,,
\end{equation}
which are local functions satisfying $\tilde{s^P}\tilde\Psi^A=(\bar s
+ \theta^\mu \dlf{y^\mu})\tilde\Psi^A$ and
$\tilde{s^P}x^\mu=\theta^\mu$. In this way one confirms that the
reduced differential is indeed $\tilde s$ with the role of $x$ and $y$
derivatives of the fields exchanged.  In terms of representatives, the
isomorphism sends functions in $x,\theta$, $\d_{(\mu)}\Psi^A$ to the
same functions with $\d_{(\mu)}\Psi^A$ replaced by
$\tilde\Psi^A_{(\mu)}$.

From the above argument, it follows that if one replaces
$\theta^\mu\dlf{x^\mu}$ with $\theta^\mu\dlf{y^\mu}$ in the expression
for $\tilde{s^P}$, the reduced differential obviously remains
intact. Moreover, after this replacement, the extended differential
coincides with $(\tilde s)^P$ from \eqref{eq:xx} and can be seen as
the prolongation of $\tilde s$ with the role of $x$ and $y$
derivatives exchanged, up to the sign conventions discussed in
footnote~\ref{foot:sign}.

It appears more natural to consider such a modified differential as
the extended BRST differential associated to the parent theory because
then the only term that involves $x$-derivatives of the fields is
still $d^F$.

In the context of the extended parent theory, there are now bona fide
$\theta$ dependent combinations of variables and field
redefinitions. For instance, when taking as a degree $-\cN_{\d_\theta}$
which is bounded from above, the term $\tilde s$ (with the role of
$x,y$ derivatives exchanged) is in lowest degree. It follows that
\begin{prop}\label{d+s}
Algebraically trivial pairs for $\tilde s$ give rise to a family of generalized
auxiliary fields for the extended parent theory involving all
descendants obtained trough $\d^\theta_\nu$ and $\dover{^F}{y^\lambda}$
derivatives. Trivial pairs for $\tilde s$ that are not necessarily
algebraic give rise to a family of generalized auxiliary fields
comprising all descendants obtained through $\d^\theta_\nu$
derivatives. 
\end{prop}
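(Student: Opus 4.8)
The plan is to mimic the structure of the proof given just above for Proposition on trivial pairs for $\tilde s$ (Section~\bref{sec:sigma+bars}), but now working inside the \emph{extended} parent theory, i.e.\ with the differential in which $\theta^\mu\dlf{x^\mu}$ has been traded for $\theta^\mu\dlf{y^\mu}$, so that the only term producing $x$-derivatives of the fields is $d^F$. The key observation that makes this work is precisely the one recorded just before the statement: with the degree $-\cN_{\d_\theta}$, which is bounded from above, the lowest-degree piece of the extended parent differential is $\tilde s$ itself (with the role of $x$ and $y$ derivatives exchanged), exactly as $\bar s$ was the lowest piece in the target-space reduction of Proposition~\bref{prop:2.2}.

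First I would set up notation: given an algebraically trivial pair $w,v$ for $\tilde s$ in the original theory, with $\tilde s w = v$, replace $x$-derivatives by $y$-derivatives throughout $w,v$ to obtain fields $\bar w,\bar v$ in the parent field content, and then form the descendant families $w_{\nu_1\ldots\nu_k}=(\d^\theta_{\nu_1}\ldots\d^\theta_{\nu_k}\bar w)|_{\theta=0}$ and $v_{\nu_1\ldots\nu_k}$ with the same sign convention $(-1)^k$ as in \eqref{eq:part}. The commutation relation $\commut{\d^\theta_\mu}{(\tilde s)^P}=\dl{x^\mu}+\dlf{x^\mu}$ (valid for the extended differential \eqref{eq:xx}) then gives, exactly as in \eqref{eq:part}, relations of the schematic form $(d^F+\bar{\tilde s})w_{\nu_1\ldots\nu_k}=v_{\nu_1\ldots\nu_k}+(\text{lower }k\text{ terms involving }\dl{x}+\dlf{x})$. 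The point of the sign choice is that the leading term on the right is exactly $v_{\nu_1\ldots\nu_k}$ with the correct sign.

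Next I would run the homological argument. Using $\cN_{\d_\theta}$ as a grading on the $w$'s and $v$'s, the operator $\sigma^F$ (together with the $\theta\partial_\theta$ pieces coming from $\bar{\tilde s}$) lowers this degree by one, $\rho^F$ raises it, and the relations above show that the equation $(d^F+\bar{\tilde s})w_{\nu_1\ldots\nu_k}|_{\text{others }=0}=0$ can be solved algebraically for $v_{\nu_1\ldots\nu_k}$ by descending induction on $k$, since at each step the term one must solve for appears with invertible (identity) coefficient and all obstructions sit at strictly lower values of the degree. Hence the whole family $\{w_{\nu_1\ldots\nu_k},v_{\nu_1\ldots\nu_k}\}_{k\geq 0}$ consists of generalized auxiliary fields, and for the \emph{algebraic} case one also has available the extra $\dover{^F}{y^\lambda}$-descendants, because in that case $w,v$ contain no $x$-derivatives at all, hence after the substitution no $y$-derivatives in an obstructing way — so the pairs $\dover{^F}{y^{\lambda_1}}\ldots\dover{^F}{y^{\lambda_m}}w_{\nu_1\ldots\nu_k}$ and their $\bar{\tilde s}$-images again form contractible pairs, now for $-\sigma^F+\bar s$, exactly as in the last sentence of the proof in Section~\bref{sec:sigma+bars}.

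The main obstacle I anticipate is bookkeeping rather than conceptual: one has to check carefully that the substitution $\theta^\mu\dlf{x^\mu}\to\theta^\mu\dlf{y^\mu}$ does not spoil the commutator $\commut{\d^\theta_\mu}{(\tilde s)^P}$ used to derive the descendant relations, and that the degree $-\cN_{\d_\theta}$ genuinely isolates $\tilde s$ as the lowest piece of the modified differential (this is asserted in the paragraph preceding the statement, so it may simply be cited). A secondary subtlety is the non-algebraic case: there the $w$'s may involve $x$-derivatives, so after substitution they involve $y$-derivatives, and one must verify that the $\dover{^F}{y^\lambda}$-descendants are \emph{not} automatically trivial — which is why the statement only claims the $\d^\theta_\nu$-descendants in that case. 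Once these points are in place, the argument is a direct transcription of the already-proven propositions, so I would keep the write-up short and point to them.
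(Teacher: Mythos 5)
Your proposal is correct and takes essentially the same route as the paper: the paper's entire proof of this proposition is the sentence preceding it (with the degree $-\cN_{\d_\theta}$, which is bounded from above, the lowest piece of the extended parent differential is $\tilde s$ with the roles of $x$- and $y$-derivatives exchanged, so the general degree machinery of Section~\bref{sec:degrees} and the pattern of the earlier propositions apply), and your explicit descendant construction via $\d^\theta_\nu$ and the commutator relation is just that argument written out along the lines of the proof in Section~\bref{sec:sigma+bars}. One small inaccuracy: algebraically trivial pairs for $\tilde s$ generally do involve $x$-derivatives (e.g.\ the Yang--Mills pairs $\d_{(\mu_1}\ldots\d_{\mu_{k-1}}H_{\mu_k)}$), so the availability of the $\dover{^F}{y^\lambda}$-descendants in the algebraic case rests on the decomposition respecting the $\d_\mu$ (hence, after transport, the $y^\lambda$) differential structure, not on the absence of derivatives in $w,v$.
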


For instance, for diffeomorphism invariant theories as discussed in
subsection~\bref{sec:diffeo}, but now considered in the context
of the extended parent formulation, it is most useful to consider the
$\theta$ dependent change of variables
\begin{equation}
  \label{eq:5}
  \xi^\lambda\to \xi^\lambda-\theta^\lambda,
\end{equation}
from the very beginning. Indeed, on the level of $\tilde s$, it allows
one to absorb the field dependent part of $d_H$ into the starting
point BRST differential. It follows that no $\sigma^F$ appears in the
prolongation. This is consistent with the fact that, on the level of
the standard parent theory, the prolongation of \eqref{eq:5} gives
rise to the redefinition \eqref{eq:6} needed to absorb
$\sigma^F$. In terms of the new variables, the extended parent
differential simply becomes $(\tilde s)^P=d^F+\bar
s+\theta^\mu\dover{}{x^\mu}$. The only term that involves
$x^\mu,\theta^\mu$ is the last one. As a consequence, these variables
are trivial pairs that can be eliminated. The extended parent theory
is then simply described by
\begin{equation}
    \label{eq:26}
   (\tilde s)^P_R=d^F+\bar s\,,
\end{equation}
acting in the space of $x^\mu,\theta^\mu$ independent local
functions. 

\section{Examples}
\label{sec:examples-1}

\subsection{Theory without gauge freedom}

Suppose we have a theory without gauge freedom. Let $\phi^k$ denote
the fields of the theory.  In the BV description, there are in
addition antifields $\phi^*_a$ and the BRST differential is determined
by
\begin{equation}
 s\phi^k=0\,,\qquad s\phi^*_a=L_a\,, \qquad 
\commut{s}{\d_\mu}=0\,,
\end{equation} 
where $L_a[x,\phi^k_{(\mu)}]=0$ and their prolongations $\d_{(\mu)}
L_a=0$ are the original dynamical equations determining the so-called
stationary surface in the space of fields and their $x$-derivatives.

In the parent theory, the only fields of ghost number zero are the
$y$-derivatives of the original fields $\phi^k_{(\lambda)}$ as all
antifields carry negative ghost number.  The parent theory equations
of motion are
\begin{gather}
  (\d_\mu-{\dlf{y^\mu}})\phi^k_{(\lambda)}=0\,, \label{first}\\
  {\dlf{y^{\lambda_1}}}\ldots {\dlf{y^{\lambda_l}}} \bar
  L_a=0\,,\label{second}
\end{gather}
where the equations in the second line determine the equivalent of the
stationary surface in the space of $x^\mu$, the fields and their 
$y$-derivatives. Note that ${\dlf{y^\lambda}}$ is a vector field on this space
which does not affect $x^\mu$. Equations~\eqref{second} are obviously
preserved under the action of ${\dlf{y^\lambda}}$ so that the vector field
${\dlf{y^\lambda}}$ restricts to this stationary surface. We use
$\sigma_\lambda$ to denote this restriction.

Let $x^\mu,Q^\alpha,v^i$ denote a new coordinate system replacing
$x^\mu,\phi^k_{(\lambda)}$ such that $Q^\alpha$ can be used as
coordinates on the stationary surface, while $v^i$ are complementary
coordinates that replace the left hand side of the equations
in~\eqref{second}.  In the $Q^\alpha$ coordinate system one has
\begin{equation}
  \sigma_\lambda = \sigma_\lambda^\alpha(Q)\dl{Q^\alpha}\,,\qquad 
\sigma_\lambda^\alpha(Q)=\left[{\dlf{y^\lambda}} Q^\alpha\right]\Big|_{v^i=0}\,.
\end{equation} 
In terms of the new coordinates, Equations~\eqref{second} simply put
$v^i$ to zero, while Equation~\eqref{first} take the form of a
covariant constancy condition 
\begin{equation}
\label{nlunf}
 \d_\mu Q^\alpha-\sigma^\alpha_\mu(Q)=0\,.
\end{equation}

As a simple illustration, let us consider a scalar field on Minkowski
space with a cubic interaction. We refer to~\cite{Shaynkman:2000ts}
for a detailed discussion of the unfolded formulation for a scalar field
(see also~\cite{Vasiliev:2005zu} for an off-shell description). 

If $\Box^y \phi=\eta^{\lambda_1\lambda_2}\phi_{\lambda_1\lambda_2}$,
constraints~\eqref{second} are given by 
\begin{equation}
  \Box^y\phi+g \phi^2=0\,,
\end{equation}
and its prolongations through $y$-derivatives.  As coordinates
$Q^\alpha$ one can take the traceless parts $\phi^T_{(\lambda)}$ of
$\phi_{(\lambda)}$ while the ${\dlf{y^{(\lambda)}}}(\Box \phi+g
\phi^2)$ are the coordinates $v^i$.  In order to write down explicitly
how $\sigma_\lambda$ acts on some of the $Q^\alpha$, we have to use for
instance
\begin{equation}
  \phi_{\lambda_1\lambda_2}=\phi^T_{\lambda_1\lambda_2}+
  \frac{1}{n}\eta_{\lambda_1\lambda_2}\Box^y\phi=
  \phi^T_{\lambda_1\lambda_2}+\frac{1}{n}\eta_{\lambda_1\lambda_2}
\big[(\Box^y\phi+g \phi^2)-g\phi^2\big]\,,
\end{equation} 
where $n$ is a space time dimension. One then finds
\begin{equation}
  \sigma_\lambda \phi^T=(\dlf{y^\lambda}\phi)|_{v^i=0}=\phi_\lambda^T\,, \qquad
  \sigma_{\lambda_1} \phi^T_{\lambda_2}=\phi^T_{\lambda_1\lambda_2}-
\frac{g}{d}\eta_{\lambda_1\lambda_2}\phi^2\,,
  \qquad \ldots\,,
\end{equation}
so that already for $\phi^T_\lambda$, the coefficients of
$\sigma_\lambda$ become nonlinear.  

\subsection{Geodesic motion of point particle}

Let us illustrate the construction on the example of a point particle
moving along a geodesic in a (pseudo)-Riemannian space-(time).  Since
the model is diffeomorphism invariant in one dimension, its equations
of motion can be brought into an AKSZ form according to our general
discussion in section~\bref{sec:diffeo}. In fact, this holds at the
level of the master action as well. Indeed by going on-shell, we will
show that the target space of the AKSZ parent theory can be reduced to
the extended BFV phase space of the model on which the target space
differential is induced by the BRST charge. Furthermore, it is known
that the BV master action associated to canonical BFV gauge theories
with vanishing Hamiltonians are of AKSZ form~\cite{\GD}.

Using an auxiliary field $\lambda$ playing the r\^ole of an einbein
and the notation $\d=\dover{}{\tau}$, the action for geodesic motion
of a point partice is given by
\begin{equation}
  S[X,\lambda]=\half\int d\tau \big[\lambda^{-1} g_{\mu\nu}(X) \d X^\mu
  \d X^\nu+\lambda m^2\big]=\int d\tau \cL\,.
\end{equation} 
The gauge symmetry corresponding to infinitesimal reparametrizations
of $\tau$ acts as
\begin{equation}
  \delta X^\mu =\d X^\mu \epsilon \,,\qquad 
\delta \lambda = \d(\lambda \epsilon)\,,
\end{equation} 
where $\epsilon$ is the gauge parameter.

Promoting the gauge paramater $\epsilon$ to a Grassmann odd ghost
$\xi$ and introducing the antifields $X^*_\mu,\lambda^*,\xi^*$, the
complete starting point BRST differential is given by
\begin{equation}
\label{s-particle}
\begin{split}
  s X^\mu&=\xi \d X^\mu\,,\quad s \lambda =\d{(\xi
    \lambda)}\,,\quad
  s \xi=\xi\d\xi\,,\\
  s X^*_\mu&=\vddl{\cL}{X^\mu}+\d{(\xi X^*_\mu)}\,,\quad s
  \lambda^*=\vddl{\cL}{\lambda}+\xi \d \lambda^*\,,\\ s
  \xi^*&=\xi \d \xi^*+x^*_\mu \d X^\mu -\lambda \d \lambda^*+2
  \xi^*\d \xi\,.
\end{split}
\end{equation}
Since the model is diffeomorphism invariant and the BRST
transformation of each variable contains the time derivative of this
variable, the considerations of the section ~\bref{sec:diffeo} apply
and the parent theory can be expressed in AKSZ form.

Let us recall the results of \cite{Brandt:1997iu}. They state that the
variables $\{\d^q \lambda,\d^q X^*_\mu,\d^q \xi^*,
\,\,q=0,1,\ldots\}$, where $\partial$ denotes the $\tau$ derivatives
along with their $s$ variations, which can be used to replace the
variables $\{\d^{q+2}X^\mu,\d^{q+1}\xi,\d^{q+1}\lambda^*\}$,
form trivial pairs for the extended BRST differential
$\tilde s$. The remaining coordinates are chosen as
\begin{equation}
\label{change}
 \begin{gathered}
 \tau\,,\theta\,,\qquad   X^\mu\,,\qquad
   P^\mu=\lambda^{-1}\dot X^\mu-(\xi+\theta) g^{\mu\nu} X^*_\nu\,,\\
   \eta=-\lambda(\xi+\theta) \,,\qquad
   \cP=-\lambda^*+\lambda^{-1}(\xi+\theta) \xi^*\,.
\end{gathered}
\end{equation}
Note that no $\tau$ derivatives of the remaining variables
appear. Besides $\tilde s \tau=\theta$, the reduced BRST differential
reads
\begin{equation}
\label{phase}
\tilde s X^\mu= -\eta P^\mu\,, \quad \tilde s P^\mu=\eta
\Gamma_{\rho\nu}^\mu P^\rho P^\nu  \,,\quad 
\tilde s \eta=0\,, \quad \tilde s \cP=\half(P^\mu P_\mu-m^2)\,.
\end{equation}
When using the results of Section~\bref{sec:sigma+bars}, it follows
that the parent differential reduces to $d^F$ plus the prolongation of
the differential defined by \eqref{phase}. Before describing the
latter prolongation more explicitly, let us note that $\tilde s$ is
the BFV Hamiltonian BRST differential of the model. Indeed,
introducing the Poisson bracket by $\pb{X^\mu}{P_\nu}=\delta_\nu^\mu$
and $\pb{\eta}{\cP}=1$,
\begin{equation}
\tilde s =\pb{\Omega}{\cdot}\,, \qquad 
\Omega=\half \eta (P^\mu P_\mu-m^2)\,,
\end{equation} 

For the prolongation, one introduces for each of the remaining
coordinates $z^A\equiv(X^\mu,P_\mu,\eta,\cP)$ a coordinate of opposite
Grassmann parity and ghost number differing by $-1$ according to 
\begin{equation}
\tilde X^\mu=X^\mu+P_*^\mu\theta \,,\quad \tilde P_\mu=P_\mu-
X^*_\mu\theta \,,\quad \tilde\eta=\eta+\cP^*\theta\,,
\quad \tilde \cP=\cP+\eta^*\theta\,.
\end{equation} 
The notations here are chosen such that $z^A$ and $z^*_A$ are
conjugated with respect to the antibracket induced by the above
Poisson bracket~(see~\cite{Alexandrov:1997kv,\GD} for the details on
relation of the target space and the field space bracket structures).

It turns out that the parent differential $s^P=(d^F+\bar{\tilde
  s})|_{\theta=0}$ coincides with the BV differential associated to the
first order master action
\begin{equation}
  \label{eq:21}
  S=\int  d\tau\,\big[ P_\mu \dot X^\mu +\cP \dot \eta
  -\{\Omega,z^Az^*_A\}\big]  =
  \int d\tau d\theta\,\big[d\tilde X^\mu \tilde
  P_\mu -d \tilde \eta \tilde \cP -\Omega(\tilde z)\big]\,.
\end{equation}
The associated classical action can be obtained from $S$ by putting to
zero all the variables with nonvanishing ghost degree and is given by:
\begin{equation}
 S_0= \int  d\tau\,\big[ P_\mu \dot X^\mu 
  -\half \cP^*(P^\mu P_\mu-m^2)\big]\,,
\end{equation} 
where $\cP^*$ is to be identified with the Lagrange multiplier of the
Hamiltonian formalism.

\subsection{Parametrized mechanics}\label{sec:param-ex}

Consider a system of ordinary differential equations
\begin{equation}
\label{1deq}
 \dot \psi^A=V^A(\psi,t)\,.
\end{equation} 
In the parametrized version, one considers new fields $e,t$ and
introduces a new independent variable $\tau$. The equations of motion
and gauge symmetries take the form
\begin{equation}
\label{1daksz}
 \dl{\tau}\psi^A=eV^A(\psi,t)\,, \quad \dl{\tau}t =e\,,\qquad 
\delta_\epsilon \psi^A= \epsilon V^A
\,, \quad \delta_\epsilon t = \epsilon \,,\quad \delta_\epsilon e= 
\dl{\tau}\epsilon \,.
\end{equation} 
In the gauge, $t=\tau$, they indeed coincides with the starting point
system.

Let us now show how~\eqref{1daksz} can be arrived at through the
parametrized parent formulation.  As a byproduct, this also shows
that~\eqref{1daksz} in fact defines an AKSZ-type sigma model in 1
dimension.  The BRST description of the dynamics~\eqref{1deq} is
achieved by introducing a ghost $\xi$ and antifields $\cP^A$.
Variables on the extended jet space are
$\overset{n}{\psi}{}^A,\overset{n}{\cP}{}^A,t,\xi$ where the
superscript $n$ denotes the order of derivatives,
i.e.,~$\overset{1}{\psi}{}^A=\dot\psi^A$. The BRST differential is
determined by
\begin{equation}
  \tilde s t=\xi\,, \qquad \tilde s\xi=0 \,, \qquad \tilde s
  \cP^A=\overset{1}{\psi}{}^A-V^A(\psi,t)+\xi \overset{1}{\cP}{}^A\,,
  \qquad 
  \tilde s \psi^A=\xi \overset{1}{\psi}{}^A\,,
\end{equation} 
and the condition that it commutes with the total time
derivative. According to the general prescription of
Section~\bref{sec:param}, the parametrized parent formulation is a 1d
AKSZ-type sigma model whose extended space-time has coordinates
$\tau,\theta$ while the target space coordinates are
$\overset{n}{\psi}{}^A,\overset{n}{\cP}{}^A,t,\xi$.

It is easy to see that $\overset{n}{\cP}{}^A$ and
$\overset{n+1}{\psi}{}^A-\d_n V^A$ for $n\geq 0$ enter trivial pairs
for $\tilde s$ and can be eliminated. In the reduced theory one stays
with just the coordinates $t,\xi,\psi^A$.  The reduced differential is
given by
\begin{equation}
s^{\red}=d^F+\bar Q\,, \qquad Q=\xi (V^A\dl{\psi^A}+\dl{t})\,.
\end{equation} 
Identifying $\dlf{\theta}\xi$ with the field $e$ of the starting point
formulation, it is straightforward to check that
$s^{\red}$ precisely determines the parametrized
system~\eqref{1daksz}.

The fact that parametrized mechanics can be represented as a 1d AKSZ
sigma model is not surprising and seems to be known.~\footnote{In
  particular, it was independently arrived at by~A.~Sharapov whom we
  wish to thank for a related discussion.} Indeed, as it was already
shown in the case of Hamiltonian/Lagrangian systems that any theory
with vanishing Hamiltonian, and thus in particular a parametrized
system, can be reformulated as a 1d AKSZ sigma model~\cite{\GD} (see
also the discussion in~\cite{Barnich:2005ru}). The above example
provides the non-Lagrangian/non-Hamiltonian version of this result and
can of course easily be generalized to include systems with a gauge
freedom.
 
\subsection{Yang-Mills theory}
\label{sec:yang-mills-theory}

The set of fields for Yang-Mills theory are the components of a Lie
algebra valued 1-form $H_\mu$ and ghost $C$ along with their conjugate
antifields $H^{*\mu}_i$ and $C^*_i$, where $i$ is the Lie algebra
index. The BRST differential is given by $s=\gamma+\delta$, 
\begin{equation}
 \begin{gathered}
   \gamma H_\mu=\d_\mu C+\commut{H_\mu}{C}\,, 
\qquad \gamma C=-\half\commut{C}{C}\,.\\
   \gamma H^{*\mu}_i=f^j_{ik} H^{*\mu}_j C^k\,, 
\qquad \gamma C^*_i = f^k_{ji} C^*_k C^j\,,\\
   \delta H^{*\mu}_i=\vdl{H_\mu}{L[H]}\,,\qquad \delta C^*_i=-\d_\mu
   H^{*\mu}_i + f^k_{ji}H^{*\mu}_k H_\mu^j
\end{gathered}
\end{equation} 
where $L[H]={\rm Tr}\ F_{\mu\nu}F^{\mu\nu}$ is the Lagrangian in terms
of the associated curvatures $F_{\mu\nu}=\d_\mu H_\nu-\d_\nu
H_\mu+[H_\mu,H_\nu]$ and $f^k_{ij}$ are the structure constants. Note
that all of the discussion below that does not involve the precise
form of the original equations of motion and their parent
implementation applies to any regular Lagrangian that is gauge
invariant up to a total derivative.

By reducing to the cohomology of $\delta$, the antifields can be
eliminated from the parent theory as explained at the end of
subsection~\bref{sec:going-shell}. The parent theory is then determined
by $\bar \gamma$ 
\begin{equation}
 s^P=d^F-\sigma^F+\bar \gamma\,, 
\end{equation} 
and the algebraic constraints coming from the equations of motions
\begin{equation}
  \label{eq:20}
  \qquad
 \overline{\vddl{L[H]}{H_\mu}}=0\,,
\end{equation}
with $x$-derivatives replaced by $y$-derivatives, together with all
prolongations of these equations obtained by acting multiple times
with $\dlf{y^{\lambda}}$ and $\dlf{\theta^{\nu}}$.

Let us identify explicitly the field content and the equations of
motion of this reduced parent theory. At ghost number zero we have the
fields $(H_\mu)_{(\lambda)|[\,]}(x)$ and $C_{(\lambda)|\mu}(x)$. It is
useful to keep the $y$ variables and to work in terms of the following
generating functions
\begin{equation}
  A(x,y,\theta)=C_{(\lambda)|\mu}(x)
y^{(\lambda)}\theta^\mu \,,\quad B_\mu(x,y,\theta)= (H_\mu)_{(\lambda)|[\,]}(x)y^{(\lambda)}\,.
\end{equation} 
The equations of motion for these fields are given by $s^P\Psi^A=0$
after having put to zero all fields except for those at ghost number
$0$. One thus has to act with $s^P$ on
$C_{(\lambda)|\mu\nu}y^{(\lambda)}(x)$ and
$(H_\mu)_{(\lambda)|\nu}y^{(\lambda)}$ to find
\begin{equation}
\label{YM}
dA=\sigma A+\half\commut{A}{A}\,,\quad dB_\mu=-\dl{y^\mu}A+
\sigma B_\mu+\commut{A}{B_\mu}\,.
\end{equation}
Along with the above algebraic constraints, these equations are
equivalent to the ones of the original Yang-Mills theory.  In the
abelian case, Equations \eqref{YM}, reduce to the spin $1$ sector of
the equations proposed in \cite{\BGST}. In the non-abelian case,
they were proposed in~\cite{Vasiliev:2005zu}. Let us
also mention a closely related formulation in terms of bi-local
fields~\cite{Ivanov:1976zq,Witten:1978xx,Ivanov:1979hh}.

The above parent formulation can be reduced further by eliminating
from the very start contractible pairs for $\tilde\gamma$ as discussed
in \cite{Brandt:1997iu,Brandt:1996mh,Brandt:2001tg}.  For Yang-Mills
theories, these pairs are given for $k \geq 1$ by the variables
$\d_{(\mu_1}\ldots \d_{\mu_{k-1}} H_{\mu_k)}$ and $\tilde \gamma
\d_{(\mu_1}\ldots \d_{\mu_{k-1}} H_{\mu_k)}$ which substitute for
$\d_{\mu_1}\ldots \d_{\mu_{k}} C$. At the same time, as remaining
variables one uses $\tilde C=C+H$, where $H= H_\mu\theta^\mu$ and the
algebraically independent components of the covariant derivatives of
the curvatures, $D_{\mu_1}\dots D_{\mu_{k-1}}F_{\mu_k\nu}$, which are
given by $D_{(\mu_1}\dots D_{\mu_{k-1}}F_{\mu_k)\nu}$ on account of
the Bianchi identities.  In the approach of
\cite{Brandt:1997iu,Brandt:1996mh,Brandt:2001tg}, the former are known
as generalized connections and the latter as generalized tensor
fields. By direct computation, it follows that
\begin{equation}
  \label{eq:23}
  \tilde\gamma \tilde C=-\half[\tilde C,\tilde C]+F\,,\qquad F=\half
  F_{\mu\nu} \theta^\mu\theta^\nu\,,
\end{equation}
which is the celebrated ``Russian formula'' \cite{Stora:1983ct}.
Furthermore, 
\begin{equation}
  \label{eq:24}
  \tilde \gamma (D_{\mu_1}\dots
D_{\mu_{k-1}}F_{\mu_k\nu})=\theta^{\mu_0}D_{\mu_0}D_{\mu_1}\dots
D_{\mu_{k-1}}F_{\mu_k\nu}+[D_{\mu_1}\dots
D_{\mu_{k-1}}F_{\mu_k\nu},\tilde C]\,,
\end{equation}
so that the reduced differential is simply given by \eqref{eq:23} and
\eqref{eq:24} in terms of the over-complete coordinates
$D_{\mu_1}\dots D_{\mu_{k-1}}F_{\mu_k\nu}$.
The standard equations of motion are imposed by extracting all traces
from the independent covariant derivatives of the curvatures
\cite{Torre:1995kb}~: the independent jet-coordinates parametrizing
solution space are given by $[D_{(\mu_1}\dots
D_{\mu_{k-1}}F_{\mu_k)\nu}]^T$, where the superscript $T$ indicates
the trace-free part. Note that in case one does not want to take out
these traces, one has to keep the Koszul-Tate differential acting on
the antifields and their covariant derivatives.

The field content of the fully reduced parent theory is given by the
$\theta$ prolongation of the Lie algebra valued fields $\tilde C$,
$\tilde C+\tilde C_\nu\theta^\nu +\half \tilde
C_{\mu\nu}\theta^{\nu}\theta^{\mu}$, of which only
$A^i_\nu\equiv-\tilde C^i_\nu$ are in ghost number $0$. Note that
there are no more $y$ derivatives of these variables. In addition
there are the $\theta$ prolongations of the over-complete set of
fields given by the covariant derivatives of the curvatures,
$D^y_{\lambda_1}\dots D^y_{\lambda_{k-1}}F^y_{\lambda_k\nu}+
G_{\lambda_1\dots\lambda_k\nu\rho}\theta^\rho+\dots $, of which only
the $\theta$ independent terms are of ghost number $0$.

Applying $\d^\theta_\nu\d^\theta_\mu$ to both sides of the Russian
formula~\eqref{eq:23} and using Equation~\eqref{eq:part} and its
generalizations, one directly gets $ (-\sigma^F+\bar\gamma)\tilde
C_{\mu\nu}=\commut{A_\mu}{A_\nu}- F^y_{\mu\nu}+\ldots$, where $\ldots$
denote terms involving fields with nonvanishing ghost numbers and the
curvature involves $y$ derivatives of $H$.  Using this in
$(d^F-\sigma^F+\bar s)\tilde C_{\mu\nu}=0$ and putting all the fields
of nonvanishing ghost degree to zero gives the first part of the
equations of motion. When contracting indices with $\theta$'s, they
can be compactly written as
\begin{equation}
\label{YMunf1}
dA+\half\commut{A}{A}=F^y\,,
\end{equation} 
and express the equality of the parent form of the $H$
curvature with the $A$ curvature in terms of $x$ derivatives as
dynamical equations. 

In terms of the over-complete set of fields, the derivation of the
remaining equations of motion is straightforward. Applying
$\d^\theta_\rho$ to both sides of Equation~\eqref{eq:24} and using
Equation~\eqref{eq:part} and its generalizations, one gets 
\[
(-\sigma^F+\bar s) G_{\lambda_1\dots\lambda_k\nu\rho} =-D^y_\rho
D^y_{\lambda_1}\dots D^y_{\lambda_{k-1}}
F_{\lambda_k\nu}-[D^y_{\lambda_1}\dots D^y_{\lambda_{k-1}}
F_{\lambda_k\nu},A_\rho]\,,\] and the corresponding equation of motion
reads
\begin{equation}
\label{YMunf2}
 (\d_\rho+\commut{A_\rho}{\cdot})D^y_{\lambda_1}\dots
D^y_{\lambda_{k-1}} F_{\lambda_k\nu}=
D^y_\rho D^y_{\lambda_1}\dots
D^y_{\lambda_{k-1}} F_{\lambda_k\nu}\,.
\end{equation} 
These equations merely equate covariant derivatives of the tensor
fields with respect to $x^\mu$ using $A_\mu$ with such derivatives
with respect to $y^\mu$ using $H_\mu$. 

After an algebraic projection of the latter equations on the
independent coordinates $D^y_{(\lambda_1}\dots
D^y_{\lambda_{k-1}}F^y_{\lambda_k)\nu}$, the reduced theory is known
as the unfolded form of the theory at the off-shell level and has been
constructed in \cite{Vasiliev:2005zu}. In the unfolded approach, the
ghost number zero fields originating from $C$ are known as the gauge
module, while the $D^y_{(\lambda_1}\dots
D^y_{\lambda_{k-1}}F^y_{\lambda_k)\nu}$ form the so-called Weyl
module. The completely reduced on-shell system can be arrived at by
projecting out the traces of $D^y_{(\lambda_1}\dots
D^y_{\lambda_{k-1}}F^y_{\lambda_k)\nu}$ (see~\cite{Torre:1995kb} for
the explicit structure of the projection). For instance, requiring
$D^y_{(\mu}F^y_{\nu)\rho}$ to be totally traceless obviously imposes
the Yang-Mills equations on $A_\mu$ through~\eqref{YMunf1},
\eqref{YMunf2}, and the Bianchi identity. Note however that for higher
tensors, this projection brings in further nonlinear terms.

 \subsection{Metric gravity}
\label{sec:grav}

The BV description of metric gravity involves as fields the inverse
metric $g^{ab}$ and a ghost field $\xi^a$ that replaces the vector
field parametrizing an infinitesimal diffeomorphism, along with their
antifields $g^*_{ab}$ and $\xi^*_a$.  The BRST differential decomposes
as $s=\delta+\gamma$ where
\begin{equation}
\label{grav-delta}
 \delta g^*_{ab}=\vdl{g^{ab}}L[g]\,,\qquad \delta
 \xi^*_c=g^*_{ab}\d_{c}g^{ab}+
2\d_a(g^{ab}g^*_{bc})\,,
\end{equation} 
and
\begin{equation}
\label{grav-gamma}
\begin{gathered}
  \gamma g^{ab}=L_\xi g^{ab}=\xi^c \d_c g^{ab} -g^{cb}\d_c 
\xi^a -g^{ac}\d_c \xi^b \,,\\
  \gamma \xi^{c}=\half \commut{\xi}{\xi}^{c}=\xi^a\d_a \xi^c\,,\\
  \gamma
  g^*_{ab}=-\d_c(g^*_{ab}\xi^c)-g^*_{ac}\d_{b}\xi^c-g^*_{cb}\d_a\xi^c \,,\\
  \gamma \xi^*_c=\d_a(\xi^*_c\xi^a)+\xi^*_a\d_c\xi^a\,.
\end{gathered}
\end{equation}  
We leave open the precise choice of the diffeomeorphism covariant
equations of motion determined by $L$ and only require standard
regularity conditions together with $\gamma L=\d_aj^a$ for some
$j^a$. In this way, we allow for gravitational theories with higher
curvature and/or gravitational Chern-Simons terms.

For metric gravity, $\gamma X$ contains $\xi^a\d_a X$ for any field
$X$, so that the general discussion of Section~\bref{sec:diffeo}
applies. After the field redefinition, the theory is thus of AKSZ type
with target space coordinates $g_{ab},\xi^a,g^*_{ab},\xi^*_a$ along
with their $y$-derivatives. The parent BRST differential takes the
form $s^P=d^F+\bar { s}$. It also follows from the discussion in
Section~\bref{sec:aksz-type-sigma} that one can use generic
coordinates $x^\mu$ and $\theta^\mu$ in the source space without
affecting the target space. As we are going to see, this gives to the
fields of the parent theory a natural geometrical interpretation in
terms of vielbeins, connections and their higher analogs.

Equivalent reduced formulations are obtained by eliminating various
sets of generalized auxiliary fields. For instance, following
Section~\bref{sec:going-shell}, the elimination of the antifields
$g^*_{ab},\xi^*_a$ gives rise to the differential
\begin{equation}
 s^P=d^F+\bar{\gamma}\,,
\end{equation} 
together with the algebraic constraints 
\begin{equation}
\label{grav-const}
\Big(\pr{\dl{y^{a_1}}}\ldots \pr{\dl{\theta^{b_1}}}
\ldots\big(\overline{\vdl{g^{ab}}L[g]}\big)\Big)=0\,.
\end{equation} 
These constraints can also be understood as constraints in the target
space of the AKSZ sigma model. In this case, the $\theta$-derivatives
are to be dropped and the target space becomes the stationary surface
in the jet-space approach in terms of $y$ derivatives. In other words,
the reduced theory is again an AKSZ-type sigma model with a target
space that is the submanifold defined by the constraints
\eqref{grav-const} (with $\theta$-derivatives dropped) in the
supermanifold with coordinates $g^{ab}_{(c)}$ and $\xi^a_{(c)}$. The
associated odd nilpotent vector field is given by $\gamma$. That
$\gamma$ restricts to the submanifold is a consequence of the
covariance of the equations of motion expressed through
$\commut{\delta}{\gamma}=0$.

In ghost number zero, one finds the 0-form fields $g^{ab}_{(c)}$ and
the 1-form fields $A^a_{\mu(c)}$ coming from the component linear in
$\theta^\mu$ in the expansion of $\xi^a_{(c)}$. In order to
write the equations of motion in terms of generating functions, let us
introduce besides $y^a$ additional formal variables $p_b$ and consider
the algebra of polynomials in $y,p$ equipped with the standard Poisson
bracket $\{p_a,y^b\}=\delta_a^b$. The target space coordinates
$g^{ab}_{c_1\ldots c_l}$ and $ \xi^a_{c_1\ldots c_l}$ can then
be encoded in
\begin{equation}
 G=\half g^{ab}_{(c)}y^{(c)}p_ap_b\,, \qquad \Xi= \xi^{a}_{(c)}y^{(c)}p_a\,,
\end{equation} 
and  the action of $\tilde\gamma$ on these coordinates can be
compactly written as 
\begin{equation}
 \gamma \Xi=\half \{\Xi,\Xi\}\,,\qquad  \gamma
 G=\{\Xi,G\}\,.\label{eq:liegrav}
\end{equation}
Indeed, to lowest order in $y$ these are just
formulas~\eqref{grav-gamma} and then one uses an induction in
homogeneity in $y$. In these terms, the nilpotency of $\gamma$ is a
consequence of the graded Jacobi identity for the even Poisson
bracket. It then follows that $\gamma$ is the Chevalley-Eilenberg
differential associated to the Lie algebra of formal vector fields in
the $y$-variables with coefficients in formal symmetric bi-vectors. In
terms of the parent theory fields, $\Xi+A+\dots$ with
$A=A^a_{(c)\mu}y^{(c)}p_a\theta^\mu$ and $G+\dots$, the equations for
the ghost-number-zero fields determined by the parent differential
$s^P=d^F+\bar{\gamma}$ take the familiar form
\begin{equation}
 dA+\half\{A,A\}=0\,,\qquad
 dG+\{A,G\}=0\,,\label{eq:graveom} 
\end{equation} 
which should be supplemented by the algebraic
constraints~\eqref{grav-const}.  In this reformulation, metric gravity
has turned into a gauge theory for the diffeomorphism group since the
gauge field $A_\mu$ takes values in the Lie algebra of vector fields.

The associated linearized equation for spin $2$ gauge fields were
derived in \cite{\BGST} from the parent theory perspective. Note that
the Poisson bracket can be replaced by the associated $*$-commutator
in \eqref{eq:liegrav} and \eqref{eq:graveom} if one allows for a
$p$-independent component in $G$. This corresponds to coupling to an
extra scalar field. With this replacement, equations
\eqref{eq:graveom} are known in the context of Fedosov quantization
and were shown in~\cite{Vasiliev:2005zu} to describe gravity at the
off-shell level. More generally, in~\cite{Vasiliev:2005zu} it was
shown that by allowing for all powers in $p$, one describe the entire
set of symmetric higher spin fields at the off-shell level.  Let us
also mention that understanding gravity as a gauge theory of the
diffeomorphism group dates back to~\cite{Borisov:1974bn}.  A
closely related formulation of gravity was considered
in~\cite{Pashnev:1997xk}.

Finally, let us eliminate additional generalized auxiliary fields
originating from contractible pairs for $\gamma$. In metric gravity,
these are all the derivatives of the ghosts of degree 2 or higher and
all the symmetrized derivatives of the Christoffel
symbol~\cite{Brandt:1997iu} . After this elimination, one stays with
the following variables: $\xi^a,C^a_b= \xi^a_{b}+\Gamma^a_{bc}\xi^c$
at ghost degree $1$ and $g^{ab},{R^d}_{abc},\ldots$ where dots denote
independent components of covariant derivatives of the Riemann tensor
in terms of $y$ derivatives, $D^y_{c_1}\dots D^y_{c_k}
R^b_{a_1a_2a_3}$. The action of $\gamma$ on the ghost variables is
given by\footnote{When reasoning in terms of $\tilde\gamma$, the
  absorption of the $\sigma^F$ term in the parent differential comes
  from the redefinition $\tilde\xi^a=\xi^a+\theta^a$ and \eqref{eq:yy}
  with $\gamma,\xi^a$ replaced by $\tilde\gamma,\tilde\xi^a$ is the
  gravitational Russian formula.}
\begin{equation}
 \gamma \xi^a=\xi^b C_b^a\,,\qquad 
\gamma C_a^b=C_a^c C_c^b +\half \xi^c \xi^d {R^b}_{acd}\,,\label{eq:yy}
\end{equation} 
while
\begin{multline}
  \label{eq:23a}
  \gamma D^y_{c_1}\dots D^y_{c_k}
  R^b_{a_1a_2a_3}=\xi^{c_0}D^y_{c_0}D^y_{c_1}\dots D^y_{c_k}
  R^b_{a_1a_2a_3} -C^b_dD^y_{c_1}\dots D^y_{c_k}
  R^d_{a_1a_2a_3}+\\+C^d_{c_1}D^y_{d}\dots D^y_{c_k}
  R^b_{a_1a_2a_3}+\dots + C^d_{a_3}D^y_{c_1}\dots D^y_{c_k}
  R^b_{a_1a_2d}\,.
\end{multline}
If one performs the change of variables $g^{ab}=\eta^{ab}+h^{ab}$
where $\eta^{ab}$ is the inverse Minkowski metric and considers formal
power series in $h^{ab}$, one can further eliminate $h^{ab}$ and the
symmetric part of $C^{ab}$, where the index has been raised with
$\eta^{ab}$.  The antisymmetric part $C^{[ab]}$ are the ghosts
associated with the Lorentz algebra. 

The ghost number zero fields of the completely reduced theory are then
given by $e_\mu^a=\dlf{\theta^\mu}C^a$ and $\omega_{\mu
  b}^a=\dlf{\theta^\mu}C^a_b$ and by $D^y_{c_1}\dots D^y_{c_k}
R^b_{a_1a_2a_3}$.  In terms of $e^a=e^a_\mu\theta^\mu$ and
$\omega^{[ab]}=\omega^{[ab]}_\mu\theta^\mu$, the equations of motion
of the completely reduced theory take the form
\begin{equation}
  de^a+ \omega^a_b e^b =0\,,\qquad
  d\omega^a_b+\omega^a_c\omega^c_b=\half e^c e^d R_{cda}^b\,,
\end{equation}
\begin{multline}
d (D^y_{c_1}\dots D^y_{c_k} R^b_{a_1a_2a_3})=e^{c_0}D^y_{c_0}D^y_{c_1}\dots D^y_{c_k}
  R^b_{a_1a_2a_3} -\omega^b_dD^y_{c_1}\dots D^y_{c_k}
  R^d_{a_1a_2a_3}+\\+\omega^d_{c_1}D^y_{d}\dots D^y_{c_k}
  R^b_{a_1a_2a_3}+\dots + \omega^d_{a_3}D^y_{c_1}\dots D^y_{c_k}
  R^b_{a_1a_2d}\,,
\end{multline} 
to be supplemented by the algebraic constraints coming from taking
into account the Bianchi identities to select independent variables
among the $D^y_{c_1}\dots D^y_{c_k} R^b_{a_1a_2a_3}$ and the algebraic
constraints \eqref{grav-const} (without $\theta$ derivatives).  This
completes the systematic derivation starting from the parent
formulation of the completely reduced first order gravitational
equations. When reformulated in terms of the independent fields, the
above equations are known as off-shell unfolded equations and were
proposed in~\cite{Vasiliev:2005zu}. Note that the explicit elimination
of the dependent fields and the implementation of the algebraic
constraints~\eqref{grav-const} brings in further nonlinear terms.

In the same spirit, one can construct the parent formulation for
conformal gravity, for which the relevant tensor calculus has been
constructed in~\cite{Boulanger:2004eh}. The same applies to gravity
with nonvanishing cosmological constants formulated as a gauge theory
of the (A)dS group. In the later case the formulation at the off-shell
level can be inferred from the spin-2 sector of the off-shell theory
proposed in~\cite{\Goff} (see also a somewhat related construction
in~\cite{Bonezzi:2010jr}).

\subsection{2d sigma model}
\label{sec:bosonic-string}

As a final example, let us consider a two dimensional sigma model
invariant with respect to both diffeomorphisms and Weyl
transformations. The field content of the BRST formulation is given by
the two-dimensional metric $g_{\mu\nu}$, scalar fields $\varphi^i$,
diffeomorphism ghosts $\xi^\mu$ and the Weyl ghost $C$.  The BRST
differential in the sector of these variables is given by 
\begin{equation}
\begin{gathered}
\gamma g_{\mu\nu}=\xi^\rho\d_\rho g_{\mu\nu}+
\d_\mu \xi^\rho g_{\rho\nu}+\d_\nu \xi^\rho g_{\mu\rho}+Cg_{\mu\nu}\,,\\
\gamma \varphi^i=\xi^\rho \d_\rho \varphi^i\,, \qquad 
\gamma \xi^\nu =\xi^\rho \d_\rho \xi^\nu\,,\qquad 
\gamma C=\xi^\rho \d_\rho C\,.
\end{gathered}
\end{equation}
The action can for instance be taken as
\begin{equation}
  \label{eq:29}
  S_0=\int d^2 x \big[ \half \sqrt{|g|}g^{\alpha\beta}G_{ij}(\varphi)\d_\alpha
  \varphi^i\d_\beta \varphi^j+\half B_{[ij]}(\varphi)\epsilon^{\alpha\beta}\d_\alpha
  \varphi^i\d_\beta \varphi^j\big]\,.
\end{equation}
As before, implementing the corresponding equations with their Noether
identities is done through the antifields and the Koszul-Tate part of
the BRST differential. We will not discuss this part explicitly below.

Just like in the case of gravity, $\gamma$ contains the
$\xi^\mu\d_\mu$ term so that after the redefinition $\tilde \xi^\mu=
\xi^\mu + \theta^\mu$, the parent theory is determined by
\begin{equation}
s^P=d^F+\bar\gamma\,, 
\end{equation}
to be supplemented by the parent implementation of the original
equations of motion. 

We are now going to work locally both in the base and in the target
space and eliminate the generalized auxiliary fields related to the
contractible pairs identified in~\cite{Brandt:1995gu,Brandt:1997iu} to
construct the reduced off-shell parent theory. One first uses the
Beltrami parametrization of the 2d metric and changes the basis for
ghosts accordingly:
\begin{equation}
\begin{gathered}
h=\frac{g_{11}}{g_{12}+\sqrt{g}}\,,\qquad \bar h=\frac{g_{22}}{g_{12}+
\sqrt{g}}\,, \qquad e=\sqrt{g}\\
\eta=\xi^1+\bar h\xi^2\,, \qquad \bar\eta=\xi^2+h \xi^1\,.
\end{gathered}
\end{equation}
We will use the notation $\d=\d_1,\bar\d=\d_2$ below.
As a next step one observes that the metric components and all their
derivatives along with $C,\d\bar\eta,\bar \d\eta$ and all their
derivatives form contractible pairs and hence can be
eliminated. The remaining variables in the ghost sector are 
\begin{equation}
\label{eq:red-variab}
\eta^p=\frac{1}{(p+1)!}\d^{p+1}\eta\,,\qquad 
\bar\eta^{\bar p}=\frac{1}{(\bar p+1)!}\bar\d^{\bar p+1}\bar\eta\,,
\qquad p,\bar p= -1,0,1,\dots\,,
\end{equation}
while the scalar fields and their derivatives $\d^p\bar \d^{\bar p}
\varphi^i$ are replaced by the tensor fields
\begin{equation}
\qquad T^i_{p,\bar p}=(L_{-1})^p(\bar L_{-1})^{\bar p}\varphi^i\,, 
\qquad p,\bar p= 0,1,\dots\,,\label{eq:tensors}
\end{equation}
which satisfy 
\begin{equation}
\label{T-act}
L_q T^i_{p\bar p}= \frac{p!}{(p-q-1)!} T^i_{p-q,\bar p}\quad
\text{for}\quad 
q < p\,, \quad \qquad L_q T^i_{p\bar p}=0 \quad \text{for} 
\quad q \geq p \,,
\end{equation}
with analogous formulae for $\bar L_{\bar q}  T^i_{p\bar
p}$ and where $L_p,\bar L_{\bar p}$ for $p,\bar p=-1,0,1,\dots$ 
satisfy 
\begin{equation}
\commut{L_p}{L_q}=(p-q)L_{p+q}\,, \qquad \commut{\bar L_{\bar p}}{\bar
  L_{\bar q}}=(\bar p-\bar q){\bar L}_{\bar p+\bar q}\,,\qquad
\commut{L_p}{\bar L_{\bar q}}=0\,.\label{eq:vir}
\end{equation}
Furthermore, 
\begin{equation}
  \label{eq:30}
  L_{-1}=\frac{1}{1-h\bar h}\big(\d-h\bar \d-\sum_{{\bar p}\geq 0}\bar
  H^{\bar p}\bar L_{\bar p}+h\sum_{p\geq 0} H^p L_p\big)\,.
\end{equation}
Here $H^p=\frac{1}{(p+1)!}\d^{p+1} \bar h$ and the corresponding
expressions obtained through formal complex conjugation hold for $\bar
L_{-1}$ and $\bar H^{\bar p}$.  The explicit expressions for the
tensor fields are determined by the requirement that the set of
variables \eqref{eq:red-variab}-\eqref{eq:tensors} is closed under
$\gamma$ (see~\cite{Brandt:1995gu,Brandt:1997iu} for details):
\begin{gather}
\label{vir-gamma}
\gamma \eta^p=\half\sum_{q=-1}^{p+1}(p-2q)\eta^q\eta^{p-q}\,,\qquad
\gamma {\bar \eta}^p=\half\sum_{\bar q=-1}^{\bar p+1}(\bar p-2\bar q)
\bar \eta^{\bar q}\bar \eta^{\bar p-\bar q}\,,\\
\label{T-gamma}
\gamma T^i_{p \bar p}=\sum_{q=-1}^p\eta^q L_q T^i_{p \bar p}+
\sum_{\bar q=-1}^{\bar p}{\bar \eta}^{\bar q} {\bar L}_{\bar q} T^i_{p \bar p}\,.
\end{gather}
The above relations can be compactly written using extra variables
$z,\bar z$, the regular vector fields $l_p=\dr{z}z^{p+1},\bar l_{\bar
  p}=\dr{\bar z}\bar z^{\bar p+1}$, $p,\bar p=-1,0,1,\dots$ satisfying
the same algebra as in \eqref{eq:vir} and the generating functions
\begin{equation}
\Xi=\sum_{p=-1}\eta^p l_p\,, \qquad \bar\Xi=
\sum_{\bar p=-1}{\bar \eta}^{\bar p} \bar l_{\bar p}\,,\qquad
T^i=\sum_{p=0}\sum_{\bar p=0}\frac{1}{p!\bar p!} T^i_{p\bar p}z^p z^{\bar p}\,,
\end{equation}
so that $L_pT^i=T^il_p$.  In these terms, Equations \eqref{vir-gamma}
and \eqref{T-gamma} take the form
\begin{equation}
\gamma \Xi=-\half\commut{\Xi}{\Xi}\,, \qquad \gamma T^i= T^i\Xi+ 
T^i\bar 
\Xi\,.
\end{equation}

In the reduced parent theory, the ghost number zero fields come from
the ghost fields $\eta^p(x),{\bar \eta}^{\bar p}(x)$ which give rise
to 1-form gauge fields, $A^p=A^p_\mu(x)\theta^\mu,\bar A^{\bar p}=
{\bar A}^{\bar p}_\mu(x) \theta^\mu$, with $A^p_\nu\equiv -
\eta^p_\nu, \bar A^p_\nu\equiv - \bar \eta^p_\nu$, and the $T^i_{p\bar
  p}(x)$ fields which are $0$ form fields. In terms of generating
functions, $A(x,z)=\sum_{p=-1} A^pl_p$ and $\bar A(x,\bar
z)=\sum_{\bar p=-1}{\bar A}^{\bar p}\bar l_{\bar p}$ and $T^i(x,z,\bar
z)$, the equations of motion of the reduced theory take the form
\begin{equation}
dA+\half\commut{A}{A}=0\,, \quad d\bar A+
\half\commut{\bar A}{ \bar A}=0\,,  \qquad d { T}^i + {T}^iA
+{ T}^i{\bar A}=0\,,
\end{equation}
with $d=\theta^\mu\dl{x^\mu}$. These equations can be considered as
defining the off-shell system. The on-shell version requires in
addition to impose the analog in terms of $y$ derivatives of the original
equations of motion and their prolongations on the 
$T^i(x,z,\bar z)$ fields.

\section{Conclusions}
\label{sec:conclusions}

In this paper, we have shown how to systematically construct a first
order parent theory associated with a generic interacting gauge field
theory described by an antifield dependent BRST differential. We have
then discussed how to obtain various equivalent formulations through
the elimination of generalized auxiliary fields. Our emphasis here has
been the case where the various equivalent formulations are local
field theories in the sense that all functions depend on the fields
and a finite number of their derivatives. Relaxing the locality
requirement is crucial for other types of questions, such as for
instance the reduction to the light-cone description where physical
degrees of freedom are isolated (see e.g.~\cite{Barnich:2005ga} and
references therein for a discussion in BRST theoretic terms), or the
understanding of the relation of the proper BV master action for BRST
first quantized Hamiltonian system and the $\langle\psi,\hat
Q\psi\rangle$ master action \cite{Barnich:2003wj} used in the context
of string field theories
\cite{Thorn:1987qj,Bochicchio:1987zj,Bochicchio:1987bd,Thorn:1989hm}.

Related to this issue, the variables $y^\lambda$ have been auxiliary
in our construction and merely a bookkeeping device for additional
fields in the theory. At the same time, all fields were considered as
fields on the original space-time with coordinates $x^\mu$. But as
suggested by the superfield notation used in
Section~\bref{sec:parent}, one could also consider these fields as
fields on a doubled space-time with coordinates $x^\mu,y^\lambda$, or
even more generally as fields on the superspace with coordinates
$x^\mu,y^\lambda,\theta^\nu$. In this context, it would be interesting
to try to connect the parent formulation with the recently constructed
double field theory \cite{Hull:2009mi} or the bi-local fields used for
the dual formulation of interacting higher spins on $AdS_4$ in
\cite{Koch:2010cy}.

Possible applications of the proposed formalism involve higher spin
theories at the interacting level. In this context, the most striking
results have been obtained using the unfolded
formalism~\cite{Vasiliev:1990en,Vasiliev:2003ev} (see
also~\cite{Sezgin:2002rt,Sezgin:2001ij,Sagnotti:2005ns,Didenko:2009td,%
  Giombi:2009wh,Sagnotti:2010at} and \cite{Bekaert:2005vh} for a
review). We hope to gain a somewhat better control over the theory and
to make geometrical structures manifest by phrasing it in parent
form. This is supported by a concise formulation of nonlinear higher
spin theory at the off-shell level~\cite{\Goff} (see
also~\cite{Vasiliev:2005zu}) that can be understood as an appropriate
AKSZ-type sigma model.

\section*{Acknowledgements}
\label{sec:acknowledgements}

\addcontentsline{toc}{section}{Acknowledgments}

The authors thank I.~Batalin and R.~Stora for discussions and
N.~Boulanger and P.~Sundell for comments on the manuscript. In
addition, M.G.~is grateful to K.~Alkalaev, I.~Buchbinder, V.~Didenko,
E.~Ivanov, E.~Skvortsov, I.~Tyutin, and M.~Vasiliev for discussions.  G.B.~is
supported in parts by the Fund for Scientific Research-FNRS (Belgium),
by the Belgian Federal Science Policy Office through the
Interuniversity Attraction Pole P6/11, by ``\,Communaut\'e fran\c caise
de Belgique - Actions de Recherche Concert\'ees'' and by IISN-Belgium.
M.G.~is supported by the RFBR grant 08-01-00737, RFBR-CNRS grant
09-01-93105. This work was initiated through a stay of M.G.~financed
by the International Solvay Institutes and partially completed during
a visit of G.B.~at the Lebedev Physics Institute supported through the
Dynasty Foundation.

\appendix

\section{Conventions}

Let $\cH$ be a graded superspace with basis $e_\alpha$.  Consider the
associative superalgbera $\cA$ of linear operators acting on $\cH$
from the right, i.e.
\begin{equation}
 \phi(AB)=(\phi A)B\,, \qquad A,B\in \cA\,.
\end{equation} 
The components are introduced according to
\begin{equation}
e_\alpha A=A_\alpha^\beta e_\beta
\end{equation} 
so that $\cA$ is a matrix superalgbera with the following
multiplication law
\begin{equation}
 (AB)^\beta_\alpha=A_\alpha^\gamma B_\gamma^\beta \,.
\end{equation}
Let $\cH^*$ be the dual space to $\cH$ and $\psi^\alpha$ a dual basis
so that
\begin{equation}
 \inner{e_\beta}{\psi^\alpha}=\delta^\alpha_\beta\,.
\end{equation}  
$\cH^*$ is naturally a left module over $\cA$ with the module
structure defined by
\begin{equation}
  \inner{\phi A}{f}=\inner{\phi}{A^F f}\,, 
\qquad \forall \,\,\phi \in \cH,\,\,\, f \in \cH^*\,.
\end{equation} 

The space $\cH$ can be identified with the space of linear functions
in $\psi^\alpha$ considered as supercommuting variables with parity
and grading defined by that of $e_\alpha$,
$\p{\psi^\alpha}=\p{e_\alpha}$ and
$\gh{\psi^\alpha}=\gh{e_\alpha}$. Under this identification $A^F$ is a
linear vector field on the space with coordinates $\psi^\alpha$. In
components, it reads as
\begin{equation}
 A^F=\psi^\alpha A_\alpha^\beta \dl{\psi^\beta}\,.
\end{equation} 
It can be then extended to the algebra of polynomials in $\psi^\alpha$
through the Leinbnitz rule
\begin{equation}
 A^F(fg)=(A^F f)g+(-1)^{\p{A}\p{f}}f A^Fg\,.
\end{equation} 

The above relations can be compactly written by using a distinguished
element $\Psi$ of $\cH^*\tensor \cH$, the latter being understood as a
$\cA$-bimodule such that $\cA$ acts on the first factor from the left
and on second from the right. This element corresponds to the identity
if one identifies $\cH^* \tensor \cH$ with $\cA$ and is given
in components by 
\begin{equation}
 \Psi=\psi^\alpha \tensor e_\alpha. 
\end{equation} 
In what follows we omit the tensor product sign. This is consistent
with identifying $\Psi$ as the identity element in the space of
functions on $\cH$ with values in $\cH$.

Finally, the relation between left and right actions can be compactly
written as
\begin{equation}
 A^F\Psi=\Psi A\,, \qquad A^F B^F\Psi=\Psi A B\,.
\end{equation} 
To make contact with the main text, let us suppose that instead of
$\cH$ we started with $\hat \cH=\cH \tensor \cV$, for some graded
vector space $\cV$ with basis $e_A$.The distinguished element is 
\begin{equation}
 \Psi=\Psi^{A \alpha} (e_\alpha \tensor e_A)\,,
\end{equation} 
and in terms of components $\Psi^A=\Psi^{A\alpha}e_\alpha$, the 
action of an element of $\cA$ satisfies
\begin{equation}
 B^F\Psi^A=(-1)^{\p{B}\p{\Psi^A}}\Psi^A B\,
\end{equation} 
if one consistently applies the usual sign rule.  If $e_\alpha$ stands
for a basis in polynomials in $y,\theta,x$, this gives the definitions
used in the main text.

\addtolength{\baselineskip}{-3pt}
\addtolength{\parskip}{-3pt}

\def\cprime{$'$}
\providecommand{\href}[2]{#2}\begingroup\raggedright\endgroup

\end{document}